\newtheorem{theorem}{Theorem}[]
\newtheorem{lemma}[]{Lemma}
\theoremstyle{definition}
\renewcommand{\exp}[1]{e^{ #1 }}
\newcommand{\comment}[1]{{}}
\newcommand{\thu}{Department of Mathematics, Tsinghua University,  Beijing 100084, China}
\newcommand{\YMSC}{Yau Mathematical Sciences Center, Tsinghua University,  Beijing 100084, China}
\newcommand{\bimsa}{Yanqi Lake Beijing Institute of Mathematical Sciences and Applications, Beijing 100407, China }
\begin{document}

\title{Clifford Perturbation Approximation for Quantum Error Mitigation}

\author{Ruiqi Zhang}
\thanks{These authors contributed equally to this work.}
\affiliation{\YMSC}
\affiliation{\thu}

\author{Yuguo Shao}
\thanks{These authors contributed equally to this work.}
\affiliation{\YMSC}
\affiliation{\thu}

\author{Fuchuan Wei}
\affiliation{\YMSC}
\affiliation{\thu}

\author{Song Cheng}
\thanks{chengsong@bimsa.cn}
\affiliation{\bimsa}

\author{Zhaohui Wei}
\thanks{weizhaohui@gmail.com}
\affiliation{\YMSC}
\affiliation{\bimsa}

\author{Zhengwei Liu}
\thanks{liuzhengwei@mail.tsinghua.edu.cn}
\affiliation{\YMSC}
\affiliation{\thu}
\affiliation{\bimsa}


\begin{abstract}

Quantum error mitigation (QEM) is critical for harnessing the potential of near-term quantum devices. Particularly, QEM protocols can be designed based on machine learning, where the mapping between noisy computational outputs and ideal ones can be learned on a training set consisting of Clifford circuits or near-Clifford circuits that contain only a limited number of non-Clifford gates. This learned mapping is then applied to noisy target circuits to estimate the ideal computational output. 
In this work, we propose a learning-based error mitigation framework called Clifford Perturbation Data Regression (CPDR), which constructs training sets by Clifford circuits with small perturbations. 
Specifically, these circuits are parameterized quantum circuits, where the rotation angles of the gates are restricted to a narrow range, ensuring that the gates remain close to Clifford gates. This design enables the efficient simulation of the training circuits using the Sparse Pauli Dynamics method. 
As a result, CPDR is able to utilize training sets with a better diversity to train the model, compared with previous learning-based QEM models that construct training sets with only Clifford or near-Clifford circuits.
Numerical simulations on small-scale Ising model circuits demonstrate that the performance of CPDR dramatically outperforms that of existing methods such as Zero-Noise Extrapolation and learning-based Probabilistic Error Cancellation. Furthermore, using the experimental data from IBM’s 127-qubit Eagle processor, our findings suggest that CPDR demonstrates improved accuracy compared to the original mitigation results reported in [Nature \textbf{618}, 500 (2023)].

\end{abstract}

\maketitle


\section{Introduction}
\label{sec:Introduction}

Quantum computing is approaching a pivotal milestone, at which its advantages over classical computing may be realized across various practical computational tasks~\cite{shor1994algorithms,lloyd1996universal, harrow2009quantum, shor1999polynomial}.
However, noise is unavoidably introduced in quantum devices, presenting a substantial challenge to their practical deployment~\cite{wang2021noise,sun2024sudden,shao2024simulating,aharonov2023polynomial,fontana2023classical}.
To address this issue and enable fault-tolerant quantum computation, \textit{Quantum Error Correction}~(QEC) presents a promising solution~\cite{girvin2021introduction,shor1995scheme,lidar2013quantum}.
Nevertheless, near-term quantum devices~\cite{preskill2018quantum,bharti2022noisy,chen2022complexity} fall far short of the requirements for fault-tolerance, due to their limited number of qubits and insufficient gate fidelities.
In order to suppress noise in near-term devices, \textit{Quantum Error Mitigation}~(QEM) techniques have emerged as a vital tool~\cite{qin2022overview, cai2023quantum, van2023probabilistic,kim2023evidence,guo2024experimental}. 
QEM seeks to recover the ideal expectation value of an observable on the output of a target quantum circuit, rather than the noiseless output quantum state. 
Up to now, quite a few mitigation protocols have been proposed~\cite{temme2017error,endo2018practical,huggins2021virtual,strikis2021learning,cai2021multi,liu2024virtual}, which include Zero-noise Extrapolation (ZNE) and Probabilistic Error Cancellation (PEC). Basically, ZNE mitigates the noise by running the same circuit at different noise levels and extrapolating the results to the zero-noise limit~\cite{temme2017error,li2017efficient}.
While PEC protocol works by inserting additional gates into noisy circuits and constructing the inverse mapping of each noise channel with a linear combination of these modified circuits~\cite{temme2017error,van2023probabilistic}.

With the popularity of machine learning in various fields, learning-based QEM protocols have been proposed~\cite{strikis2021learning,czarnik2021error,lowe2021unified,czarnik2022improving}. 
These protocols first learn the mapping between noisy and ideal expectation values using a training set, then apply this learned mapping to noisy target circuits to estimate the ideal expectation values.
The performance of learning-based QEM models depends heavily on the construction of the training set. 
The construction of the training set is typically based on the following two principles:
First, it is crucial to ensure that the training set consists of circuits that closely resemble the target circuits, so that the noise effects align with those of the target circuits.
Second, the circuits in the training set should be classically simulable to obtain ideal expectation values, which serve as noiseless references in the training data.
For example, the learning-based PEC method~\cite{strikis2021learning} aims to estimate the noiseless expectation value of a target circuit by using a linear combination of noisy expectation values from circuits with additional inserted gates.
The linear combination's coefficients are learned in the training set, in which circuits are created by substituting non-Clifford gates in the target circuit with various Clifford gates.
In previous works, the training sets have been primarily composed of circuits dominated by Clifford gates~\cite{czarnik2021error, lowe2021unified, strikis2021learning}, whose simulation can be efficiently conducted using the Gottesman-Knill algorithm~\cite{gottesman1998heisenberg,bravyi2016improved,howard2017application,aaronson2004improved}.

In recent years, simulation methods based on path integrals have made significant advances, thereby broadening the range of circuits that can be simulated~\cite{shao2024simulating,beguvsic2023simulating,gao2018efficient,aharonov2023polynomial,schuster2024polynomial,beguvsic2024real,angrisani2024classically,beguvsic2024fast,rudolph2023classical,fontana2023classical,bermejo2024quantum}.
Ref.~\cite{beguvsic2023simulating} proposes an efficient technique called \textit{Sparse Pauli Dynamics} (SPD), which simulates Clifford perturbation circuits, specifically quantum circuits with rotation gates whose angles lie within a predefined threshold range. Subsequent studies~\cite{beguvsic2024fast,beguvsic2024real} have numerically validated the accuracy of this approach in various practical circuits, even discarding the angle restriction. 

In this paper, we propose a novel learning-based mitigation framework, called \textit{Clifford Perturbation Data Regression} (CPDR).
This framework selects a training set from Clifford perturbation circuits and employs the SPD method to simulate noiseless expectation values for these circuits.
To ensure the reliability of the SPD method, we provide a theoretical guarantee for the accuracy with respect to the rotation angle threshold. This guarantees that the noiseless expectation values in the training set are precise.
Within this framework, we develop two protocols: CPDR-ZNE and CPDR-PEC, based on the ZNE and PEC methods, respectively.
Compared to existing approaches, applying the CPDR framework incurs no significant resource overhead while achieving improved accuracy in mitigating errors.
To benchmark our CPDR protocols, we numerically simulate their performances on small-scale Ising model evolution circuits, and then compare them with original ZNE and learning-based PEC. 
The results demonstrate that utilizing CPDR framework achieves superior accuracy.
Finally, we apply the CPDR-ZNE protocol to the experimental data from IBM's 127-qubit Eagle processor~\cite{kim2023}, demonstrating its ease of implementation and superior performance over the original mitigation methods applied in experiment~\cite{kim2023evidence}.

The remainder of this paper is organized as follows:
In Section II, we provide an overview of learning-based QEM. 
In Section III, we introduce the theoretical foundation and methodology of the numerical simulation method SPD, along with the learning-based error mitigation framework CPDR. 
Section IV presents numerical simulations of SPD, compares CPDR with common QEM protocols, and demonstrates the effectiveness of CPDR on IBM’s 127-qubit Eagle processor data, showcasing the practical benefits of our method on real-world quantum hardware.
Finally, we conclude our results and discuss future research directions in Section V.

\section{Preliminaries}

In the current NISQ era, parameterized quantum circuits (PQC) are widely used in many near-term algorithms~\cite{kandala2017hardware,farhi2014quantum}.
A typical $n$-qubit PQC, denoted as $\mathcal{C}(\bm{\theta})$, consists of a sequence of Pauli rotation gates and non-parameterized Clifford gates.
The Pauli rotation gates are represented as $e^{-i\frac{\theta}{2} P}$, where $P\in\{\mathbb{I},X,Y,Z\}^{\otimes n}$. The Clifford gates are the unitary operators that normalize the Pauli group $\bm{P}_n$: $\{C\in U_{2^n} \mid C\bm{P}_nC^\dagger=\bm{P}_n\}$. 
Without loss of generality, we assume that PQCs follow the form:
\begin{equation}\label{eq:parameterized_circuit}
  \mathcal{C}(\bm{\theta})=U_L(\theta_L)  \cdots {U}_1(\theta_1),
\end{equation}
where $\bm{\theta}=(\theta_1,\cdots,\theta_L)$ are rotation angles and $L$ is the depth of circuit. Each unitary ${U}_i(\theta_i):=\exp{-i \theta_i P_i / 2}C_i $ comprises a Clifford operator $C_i$ and a rotation $\exp{-i \theta_i P_i / 2}$ on Pauli operator $P_i\in\{\mathbb{I},X,Y,Z\}^{\otimes n}$ with angle $\theta_i$.

In this context, the quantum circuit $\mathcal{C}(\bm{\theta})$ is applied to an initial state $\rho$, and what we are interested in is the expectation value of an observable $O$, given by
\begin{equation}
  \langle O \rangle = \tr{O \mathcal{C}(\bm{\theta})\rho \mathcal{C}(\bm{\theta})^\dagger}.
\end{equation}
Suppose the noise in the quantum device is characterized by a noise rate $\lambda\in [0,1]$, and let $f(\mathcal{C}(\bm{\theta}),\lambda)$ be the expectation value measured by the noisy quantum device.
As described in Ref.~\cite{temme2017error}, by modeling the noise with the Lindblad master equation and expanding the system state into a Born series, the noisy expectation value can be approximated as a polynomial of $\lambda$ for any order $l$:
\begin{equation}\label{eq:noisy_expectation}
  f(\mathcal{C}(\bm{\theta}),\lambda)=f(\mathcal{C}(\bm{\theta}),0) + \sum_{k=1}^{l-1} c_k\lambda^k +\order{\lambda^{l}},
\end{equation}
where $c_k$ are constants dependent on the noise model and the quantum circuit.
The goal of quantum error mitigation is to recover the noiseless expectation value of the quantum circuit $f(\mathcal{C}(\bm{\theta}),0)=\langle O \rangle$, by post-processing the results of noisy measurements to minimize error effects without relying on additional qubits as in quantum error correction.

\textit{Zero-noise extrapolation}~(ZNE) is a widely used QEM technique to achieve this goal~\cite{temme2017error,li2017efficient}. The method involves artificially amplifying the noise levels and then extrapolating the expectation value to the zero-noise limit.
Suppose base noise rate~(true noise level of quantum device) is $\lambda$, the first step in ZNE is to select a set of noise levels, denoted as $\Lambda=\{\lambda_1\dots ,\lambda_l \mid \lambda_1=\lambda, \lambda_i<\lambda_{i+1}\}$, which can be realized in quantum devices.
In the original approach~\cite{temme2017error}, the various noise levels are achieved by adjusting the duration of gate operations. 
However, controlling gate time in practice is challenging, so alternative methods, such as hardware-agnostic implementations using identity insertions~\cite{dumitrescu2018cloud,he2020zero} or probability sampling~\cite{kim2023evidence}, are often employed.

The next step is to determine the coefficients $\{x_i\}$, that satisfy the following linear equations:
\begin{equation}\label{eq:linear_equation}
  \sum_{i=1}^{l}x_i=1, \quad \sum_{i=1}^{l}x_i\lambda_i^k=0 \quad \forall k\in\{1,\cdots,l-1\},
\end{equation}
which, by linear algebra, ensures a unique solution.
A weighted sum of $f(\mathcal{C}(\bm{\theta}),\lambda_i)$ with weights $\{x_i\}$, provides an estimate of the noiseless expectation value:
\begin{equation}\label{eq:estimator_noiseless}
  \begin{aligned}
    \widehat{\langle O \rangle}&=\sum_{i=0}^l x_i f(\mathcal{C}(\bm{\theta}),\lambda_i)\\
    &=\sum_{i=0}^l x_i \left(f(\mathcal{C}(\bm{\theta}),0) + \sum_{k=1}^{l-1} c_k\lambda_i^k+\order{\lambda_i^l}\right)\\
    &=f(\mathcal{C}(\bm{\theta}),0) + \order{\lambda_l^{l}}.
  \end{aligned}
\end{equation}
This approach is known as Richardson extrapolation~\cite{richardson1927viii,temme2017error}.
As noted in Ref.~\cite{lowe2021unified,he2020zero}, Richardson extrapolation is equivalent to a polynomial interpolation on the noisy expectation values. 
To illustrate this, consider $(\lambda_i,f(\mathcal{C}(\bm{\theta}),\lambda_i))_{i=1,\cdots,l}$ as discrete points to be interpolated.
A polynomial interpolation of order $(l-1)$ aims to find a polynomial $P(\lambda)=\sum_{k=0}^{l-1} c_k\lambda^k$ such that $P(\lambda_i)=f(\mathcal{C}(\bm{\theta}),\lambda_i)$ for all $i=1,\cdots,l$.
Using the Lagrange interpolation formula, the zero-order term is given by
\begin{equation}\label{eq:lagrange_interpolation}
  P(0)=c_0=\sum_{i=1}^{l} f(\mathcal{C}(\bm{\theta}),\lambda_i) \prod_{j\neq i} \frac{-\lambda_j}{\lambda_i-\lambda_j},
\end{equation}
where the coefficients $\{\prod_{j\neq i} \frac{-\lambda_j}{\lambda_i-\lambda_j}\}$ are precisely the solutions $\{x_i\}$ of the linear equation Eq.~\eqref{eq:linear_equation}. 
Therefore, the zero-order term $c_0$ in the polynomial interpolation matches the Richardson extrapolation result.

The estimator $\widehat{\langle O \rangle}$ in Eq.~\eqref{eq:estimator_noiseless} is fully determined by the noisy expectation values $\{f(\mathcal{C}(\bm{\theta}),\lambda_i)\}_{i=1,\cdots,l}$, as well as the artificially amplified noise levels $\{\lambda_i\}_{i=1,\cdots,l}$. 
Notably, when significant noise is present in the quantum device, the $(l-1)$-order approximate in Eq.~\eqref{eq:noisy_expectation} may become inaccurate. This results in the estimator $\widehat{\langle O \rangle}$ from Eq.~\eqref{eq:estimator_noiseless} potentially being a poor approximation of the behavior near $\lambda=0$. Consequently, several learning-based methods have been proposed to improve the estimation accuracy~\cite{czarnik2021error,czarnik2022improving,lowe2021unified}.

\textit{Variable-noise Clifford data regression}~(vnCDR) method~\cite{lowe2021unified}, similar to Eq.~\eqref{eq:estimator_noiseless}, derives the noiseless estimator through a linear transformation of the noisy expectation value:
\begin{equation}\label{eq:estimator_clifford}
  \widehat{\langle O \rangle}=\sum_{i=0}^l c_i f(\mathcal{C}(\bm{\theta}),\lambda_i),
\end{equation}
where $\{\lambda_i\}$ are artificially amplifying noise levels, and $\{c_i\}$ are coefficients learned from a training set.
The training set is assembled by a set of circuits $\{\mathcal{C}_1,\cdots \mathcal{C}_K\}$, that resemble the target circuit $\mathcal{C}(\bm{\theta})$, but only encompass a constant number of non-Clifford gates.
Subsequent to this process, the corresponding noisy expectation values $\{f(\mathcal{C}_j,\lambda_i)\}$ are measured. 
The extended Gottesman-Knill theorem~\cite{gottesman1998heisenberg,bravyi2016improved,howard2017application,aaronson2004improved} ensures that the expectation values of quantum circuits dominated by Clifford gates can be classical simulated efficiently, thus allowing the noiseless expectation values $\{f(\mathcal{C}_j,0)\}$ to be obtained.
The coefficients $\{c_i\}$ are learned from the Clifford training set $\{(f(\mathcal{C}_j,\lambda),f(\mathcal{C}_j,0))\}$ by solving the least-square problem:
\begin{equation}\label{eq:least_square}
  \mathrm{arg}\operatorname*{min}_{c_1,\cdots,c_l} \sum_{j=1}^{K} \left[\sum_{i=1}^{l} c_i f(\mathcal{C}_j,\lambda_i)-f(\mathcal{C}_j,0)\right]^2.
\end{equation}
This approach directly learns the linear transformation from noisy to noiseless expectation values. Empirical findings demonstrate that the vnCDR method can provide more scalable corrections than Richardson extrapolation in certain scenarios~\cite{czarnik2021error}.

\textit{Learning-based PEC} method~\cite{strikis2021learning} is another learning-based approach.
The first step involves selecting a set of operations for inserting Pauli gates into circuits, denoted as $\{g_1,\cdots,g_W\}$, for the sake of convenience, assigning $g_1$ to signify no insertion.
By performing these insertions, the target circuit $\mathcal{C}(\bm{\theta})$ is modified, yielding a series of circuits given by $\{g_1(\mathcal{C}(\bm{\theta})),\cdots,g_W(\mathcal{C}(\bm{\theta}))\}$.
This method eliminates the need for multiple noise levels, requiring only the measurement of the expectation values of these circuits at the base noise level $\lambda$ on quantum device, represented by $\{ f(g_w(\mathcal{C}(\bm{\theta})), \lambda) \}_{w=1, \dots, W}$.
The noiseless expectation value is then estimated as a linear combination of the noisy values, given by:
\begin{equation}
\widehat{\langle O \rangle}(\bm{\theta})=\sum_{w=1}^{W} c_w f(g_w(\mathcal{C}(\bm{\theta})),\lambda),
\end{equation}
where $\{c_w\}$ are the coefficients determined from the training set.
The training set comprises Clifford circuits $\{\mathcal{C}_1,\cdots \mathcal{C}_K\}$.
The coefficients are learned from the training set by solving:
\begin{equation}\label{eq:least_square_low}
 \mathrm{arg}\operatorname*{min}_{c_1,\cdots,c_w} \sum_{j=1}^{K} \left[\sum_{w=1}^{W} c_w f(g_w(\mathcal{C}_j),\lambda)-f(\mathcal{C}_j,0)\right]^2.
\end{equation}

In the following, we extend the training set beyond the Clifford circuits by introducing Clifford perturbation data.

\section{Main framework}
\label{sec:main}

In this section, we propose a learning-based error mitigation framework called the \textit{Clifford Perturbation data regression}. 
The main idea is to construct training sets using Clifford perturbation circuits, which are close to Clifford circuits and denoted as PQCs $\mathcal{C}(\bm{\theta})$, where the rotation angles are constrained to a small range $\Theta = \{\bm{\theta} \mid |\theta_i| \leq \theta_*, i = 1, \dots, L \}$, with $\theta_*$ being a small constant.

\subsection{Clifford Perturbation Approximation}\label{sec:sim}

We first introduce an efficient method to approximate the expectation value of a Clifford perturbation circuit, known as the \textit{Sparse Pauli Dynamic}~(SPD) method~\cite{beguvsic2023simulating}. We would also investigate the errors associated with truncation.

When the rotation angle $\theta$ take values from $\{\frac{k\pi}{4}\mid k\in \mathbb{Z}\}$, the Pauli rotation gate $e^{-i\frac{\theta}{2} P}=\cos{\frac{\theta}{2}}\mathbb{I}+i\sin{\frac{\theta}{2}}P$ belongs to the Clifford group. 
Thus, we assume that the rotation angle $\theta_i$ in Eq.~\eqref{eq:parameterized_circuit} lies within the range $[-\frac{\pi}{4},\frac{\pi}{4}]$. 
For any $\theta_i$ outside this range, there exists $\theta'_i\in [-\frac{\pi}{4},\frac{\pi}{4}]$ such that $\theta'_i+\frac{k\pi}{2}=\theta_i$. 
In this case, the unitary operator $U_i(\theta_i)$ can be written as $U_i(\theta_i)=e^{-i\frac{\theta}{2} P_i}C_i=e^{-i\frac{\theta'_i}{2} P_i}e^{-i\frac{k\pi}{4} P_i}C_i$, where $e^{-i\frac{k\pi}{4} P_i}C_i$ is a Clifford operator. By substituting $C_i$ with $C_i'=e^{-i\frac{k\pi}{4} P_i}C_i$, the unitary operator becomes $U_i(\theta_i)=e^{-i\frac{\theta'_i}{2} P_i}C_i'$, where $\theta'_i\in [-\frac{\pi}{4},\frac{\pi}{4}]$.

Given the restriction of rotation angles to the narrow range $[-\frac{\pi}{4},\frac{\pi}{4}]$, the Pauli rotation can be interpreted as a perturbation of Clifford operators. The expectation value could be efficiently approximated by truncating higher-order perturbation terms.
Specifically, in the Heisenberg picture, the expectation value is reformulated as $\langle O \rangle = \tr{\rho \tilde{O}}$, with $\tilde{O} = {U}_1(\theta_1)^{\dagger} \cdots  {U}_L(\theta_L)^{\dagger} O {U}_L(\theta_L)  \cdots {U}_1(\theta_1)$ representing the Heisenberg-evolved observable. 
The Heisenberg evolution of a Pauli rotation gate $e^{-i\frac{\theta}{2} P}$ acting on a Pauli operator $\sigma$ is described by:
\begin{equation}
  e^{i \frac{\theta}{2} P} \sigma e^{-i \frac{\theta}{2} P} = 
  \begin{cases}
  \sigma, & [\sigma,P] = 0, \\
  \cos(\theta) \sigma + i \sin(\theta) P\sigma & \{\sigma, P\} = 0.
  \end{cases} 
\end{equation}
In the case of a Pauli observable $O$, the first-step Heisenberg evolution ${U}_L(\theta_L)^{\dagger} O {U}_L(\theta_L)=C^\dagger_Le^{-i\frac{\theta}{2} P_L}Oe^{-i\frac{\theta}{2} P_L}C_L$ simplifies to $C_L^\dagger O C_L$, when $[O,P_L]=0$. Conversely, for $\{O,P_L\} = 0$, it becomes $\cos(\theta_L)C_L^\dagger O C_L + i\sin(\theta_L) C_L^\dagger P_L O C_L$.
By iterating this process over all unitary operators $U_i(\theta_i)$, the resulting Heisenberg-evolved observable $\tilde{O}$ takes the form of a linear combination of Pauli operators, denoted as $\tilde{O} = \sum_\sigma c_\sigma \sigma$, where $c_\sigma$ is a multivariate trigonometric monomial involving products of $\cos(\theta_i)$ and $\sin(\theta_i)$.
To truncate the perturbation terms to order $M$, we discard the terms $c_\sigma \sigma$ if the number of $\sin$ factors in $c_\sigma$ exceeds $M$. 
The truncated observable is denoted by $\tilde{O}_M=\sum_{\{\sigma\mid \abs{c_\sigma}_{\sin}\leq M\}} c_\sigma \sigma$, where $\abs{c_\sigma}_{\sin}$ denotes the number of $\sin$ factor in $c_\sigma$.
The approximate expectation value is then given by:
\begin{equation}\label{eq:sim_noiseless}
  \langle O \rangle ^{(M)}= \tr{\rho \tilde{O}_M}.
\end{equation}

For a truncation number $M$, if the number of non-zero elements in $\rho=\sum_{a,b}\rho_{a,b}\ketbra{a}{b}$, as well as the number of Pauli operators $\{\sigma\}$ that linearly compose $O$, are both polynomially related to the number of qubits $n$, denoted as $\mathrm{Poly}(n)$, then the computational cost of the algorithm is bounded by $\order{\mathrm{Poly}(n) L^{M+1}}$.

The algorithm is summarized as follows:
\begin{algorithm}[H]
  \caption{SPD: Estimate Expectation Value by Truncating Clifford Perturbation}\label{ALGORITHM_HS}
  \begin{algorithmic}
    \State Set $\langle O \rangle ^{(M)}=0$.
    \State Enumerate $\sigma_L$ as all Pauli operator with non-zero coefficient $c_L$ in $O$.
    \State Set sin counter $\#\sin_L=0$.
    \For{candidates of $\sigma_L$ in $O$}
    \State{Set candidate list $Cl_L=\{\}$}
    \If{$[\sigma_L,P_L]=0$} 
      \State Add $(C_L^\dagger \sigma_L C_L,c_L,\#\sin_L)$ to $Cl_L$.
    \Else
      \State Add $(C_L^\dagger \sigma_L C_L,c_L\cos(\theta_L),\#\sin_L)$ to $Cl_L$.
      \State Add $(iC_L^\dagger P_L \sigma_L C_L,c_L\sin(\theta_L),\#\sin_L+1)$ to $Cl_L$.
  \EndIf 
      \For{candidates of $(\sigma_{L-1},c_{L-1},\#\sin_{L-1})$ in $Cl_L$}
      \State{Eliminate cases with $\#\sin_{L-1} > M$.}
      \State{Set candidate list $Cl_{L-1}=\{\}$}
        \State{\vdots}
        \For{candidates of $(\sigma_{0},c_{0},\#\sin_{0})$ in $Cl_1$}
        \State{Eliminate cases with $\#\sin_{0} > M$.}
        \State Update $\langle O \rangle ^{(M)}=\langle O \rangle ^{(M)}+c_0\tr{\rho\sigma_0}$.
        \EndFor
    \EndFor
    \EndFor
    \State Output the approximate expectation value $\langle O \rangle ^{(M)}$.
  \end{algorithmic}
\end{algorithm}

Additionally, when the rotation parameters are located in the small angle space $\Theta=\{\bm{\theta}\mid \abs{\theta_i}\leq \theta_* ,i=1,\cdots,L \}$, the truncation error can be upper bounded, as summarized in the following theorem. More details are provided in the Suppl. Mat.~\ref{ap:sec:CPA}.

\begin{theorem}\label{thm:truncation_error}
  For any given $\delta>0$, $M>0$ satisfies $\frac{\ln{1+\frac{\delta}{2}}}{L-M}\leq \frac{\ln{2}}{M}$, and Pauli observable $O$, if $\theta_*=\frac{\ln{1+\frac{\delta}{2}}}{L-M}=\order{\frac{1}{L-M}}$, then the truncation error satisfies $\abs{\langle O \rangle - \langle O \rangle ^{(M)}}\leq \delta$ for all $\bm{\theta}\in \Theta$.
  In average case, if $\theta_*=\sqrt{\frac{3\ln{1+\frac{\delta}{2}}}{L-M}}=\order{\frac{1}{\sqrt{L-M}}}$, then the mean square error $\mathbb{E}_{\bm{\theta}\in \Theta}[(\langle O \rangle - \langle O \rangle ^{(M)})^2]\leq \delta$.
\end{theorem}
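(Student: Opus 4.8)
The plan is to work entirely in the Heisenberg picture, where the truncation error is $\abs{\langle O\rangle - \langle O\rangle^{(M)}} = \abs{\tr{\rho(\tilde O - \tilde O_M)}}$ and $\tilde O - \tilde O_M = \sum_{\{\sigma\,:\,\abs{c_\sigma}_{\sin} > M\}} c_\sigma\sigma$ collects exactly the discarded Pauli terms. Since every $\sigma$ that appears is a Hermitian Pauli operator (the factors of $i$ produced by the anticommuting rule combine with the Pauli products to keep each term Hermitian, so each $c_\sigma$ is real), we have $\abs{\tr{\rho\sigma}}\le 1$, and the triangle inequality gives $\abs{\langle O\rangle - \langle O\rangle^{(M)}}\le\sum_{\abs{c_\sigma}_{\sin}>M}\abs{c_\sigma}$. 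The whole problem thus reduces to bounding the total absolute weight of the discarded branches, which I denote $T$.

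First I would model the Heisenberg evolution as a branching process on a tree of depth $L$: processing the gates from $U_L$ down to $U_1$, each current Pauli term either passes through a commuting gate unchanged (weight factor $1$) or splits at an anticommuting gate into a $\cos\theta_i$ branch (sin-count unchanged) and a $\sin\theta_i$ branch (sin-count incremented). Introducing a bookkeeping variable $x\ge 1$ that marks each sin-factor, the generating function $Z(x) = \sum_\sigma \abs{c_\sigma}\, x^{\abs{c_\sigma}_{\sin}}$ obeys $Z(x)\le\prod_{i=1}^L(\abs{\cos\theta_i} + \abs{\sin\theta_i}x)\le(1+x\sin\theta_*)^L$, where the first inequality replaces each commuting factor $1$ by the larger quantity $\abs{\cos\theta_i}+\abs{\sin\theta_i}x\ge 1$ (valid for $x\ge 1$) and the second uses $\abs{\theta_i}\le\theta_*\le\pi/4$. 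A Markov-type estimate then yields, for every $x\ge 1$, the bound $T\le x^{-M}Z(x)\le x^{-M}(1+x\sin\theta_*)^L$; equivalently one may expand directly and bound $T\le\sum_{k>M}\binom{L}{k}(\sin\theta_*)^k$, the tail of a binomial sum.

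The remaining, and most delicate, step is to estimate this tail with the prescribed $\theta_* = \ln(1+\delta/2)/(L-M)$ and to match the stated constants. Here I would factor the tail as $\binom{L}{M}(\sin\theta_*)^M\,[(1+\sin\theta_*)^{L-M}-1]$ using $\binom{L}{k}\le\binom{L}{M}\binom{L-M}{k-M}$, and bound the bracket by $e^{(L-M)\sin\theta_*}-1\le e^{(L-M)\theta_*}-1 = \delta/2$, which is precisely where the choice $(L-M)\theta_* = \ln(1+\delta/2)$ is engineered to enter. The auxiliary hypothesis $\theta_*\le\ln 2/M$, i.e. $M\theta_*\le\ln 2$, is then what controls the leading prefactor, so that the whole estimate takes the clean form $T\le e^{M\theta_*}(e^{(L-M)\theta_*}-1)\le 2\cdot\tfrac{\delta}{2}=\delta$. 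I expect this constant-matching to be the main obstacle: a naive bound on the binomial prefactor $\binom{L}{M}(\sin\theta_*)^M$ is too lossy, and the argument must use the two hypotheses in exactly the combination that upgrades $e^{M\theta_*}\le 2$ against $e^{(L-M)\theta_*}-1=\delta/2$.

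Finally, for the average-case claim I would pass to the second moment $\mathbb E_{\bm\theta\in\Theta}[(\langle O\rangle - \langle O\rangle^{(M)})^2] = \sum_{\sigma,\sigma'}\mathbb E[c_\sigma c_{\sigma'}]\tr{\rho\sigma}\tr{\rho\sigma'}$, with each $\theta_i$ drawn independently and uniformly from $[-\theta_*,\theta_*]$. The key observation is that, by the symmetry of this interval, $\mathbb E[\sin\theta_i]=\mathbb E[\sin\theta_i\cos\theta_i]=0$, so any pair of discarded branches that disagree at some gate on whether to take the sin-branch contributes zero; the surviving pairs share their sin-gates, each shared gate contributing a factor $\mathbb E[\sin^2\theta_i]\le\theta_*^2/3$ (which follows from $\sin 2\theta_*\ge 2\theta_*-\tfrac{4}{3}\theta_*^3$). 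Re-running the tail estimate with $(\sin\theta_*)^2$ replaced by $\theta_*^2/3$ reproduces the same bound with $\theta_*^2/3$ in the role previously played by $\sin\theta_*$, so that imposing $\theta_*^2/3 = \ln(1+\delta/2)/(L-M)$, i.e. $\theta_* = \sqrt{3\ln(1+\delta/2)/(L-M)}$, delivers the mean-square bound $\delta$. The extra subtlety to verify here is that the path-dependent commutation pattern does not spoil the factorization over gates; I would handle this by summing over pairs of branches rather than over gate-patterns, so that the independence of the $\theta_i$ still lets each gate's expectation factor out.
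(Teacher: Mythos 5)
Your overall architecture is the same as the paper's: expand the Heisenberg-evolved observable over Pauli paths, bound the worst-case truncation error by the binomial tail $\sum_{k>M}\binom{L}{k}(\sin\theta_*)^k$, and for the mean-square bound use the symmetry of $[-\theta_*,\theta_*]$ to kill the cross terms ($\mathbb{E}[\sin\theta_i\cos\theta_i]=0$ at the first gate where two paths diverge, which forces surviving pairs to be identical paths) and then replace $\sin\theta_*$ by $\mathbb{E}[\sin^2\theta_i]\leq\theta_*^2/3$ in the same tail estimate. All of that, including your remark that the pair-of-branches bookkeeping is what lets the gate-wise expectations factorize despite the path-dependent commutation pattern, matches the paper's Lemmas 1--3.

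The one step that does not go through as written is your constant-matching. You factor the tail via $\binom{L}{k}\leq\binom{L}{M}\binom{L-M}{k-M}$ to get the prefactor $\binom{L}{M}(\sin\theta_*)^M$, and then assert the ``clean form'' $T\leq e^{M\theta_*}\bigl(e^{(L-M)\theta_*}-1\bigr)$, i.e.\ that $\binom{L}{M}(\sin\theta_*)^M\leq e^{M\theta_*}$. The hypothesis $M\theta_*\leq\ln 2$ does not give this: it controls $(1+\sin\theta_*)^M\leq e^{M\theta_*}\leq 2$, not $\binom{L}{M}(\sin\theta_*)^M$, which instead behaves like $(L\theta_*)^M/M!\leq\ln(2+\delta)^M/M!$ and already exceeds $e^{M\theta_*}$ for, say, $M=1$, $L=1000$, $\delta=4$ (where $L\theta_*\approx 1.1$ while $e^{\theta_*}\approx 1.001$). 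The paper avoids this entirely by bounding the tail as
\begin{equation*}
\sum_{k=M+1}^{L}\binom{L}{k}x^k\;\leq\;(1+x)^L-(1+x)^M\;=\;(1+x)^M\Bigl[(1+x)^{L-M}-1\Bigr]\;\leq\;e^{Mx}\bigl(e^{(L-M)x}-1\bigr),
\end{equation*}
with $x=\sin\theta_*\leq\theta_*$ (worst case) or $x=\theta_*^2/3$ (average case); here the prefactor is exactly the quantity the hypothesis $M\theta_*\leq\ln 2$ (resp.\ $M\theta_*^2/3\leq\ln 2/M\cdot M$, i.e.\ $c\leq\ln 2/M$) bounds by $2$, and the bracket is $\leq\delta/2$ by the choice of $\theta_*$. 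Your factorization can likely be salvaged for small $\delta$ by a separate estimate of $\binom{L}{M}\theta_*^M$, but as stated the inequality you invoke is false in general, so you should switch to the $(1+x)^L-(1+x)^M$ decomposition (or supply the missing bound on $\binom{L}{M}\theta_*^M$) to close the argument.
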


\begin{figure*}[htb!]
 \centering%
 \includegraphics[width = 1.8\columnwidth]{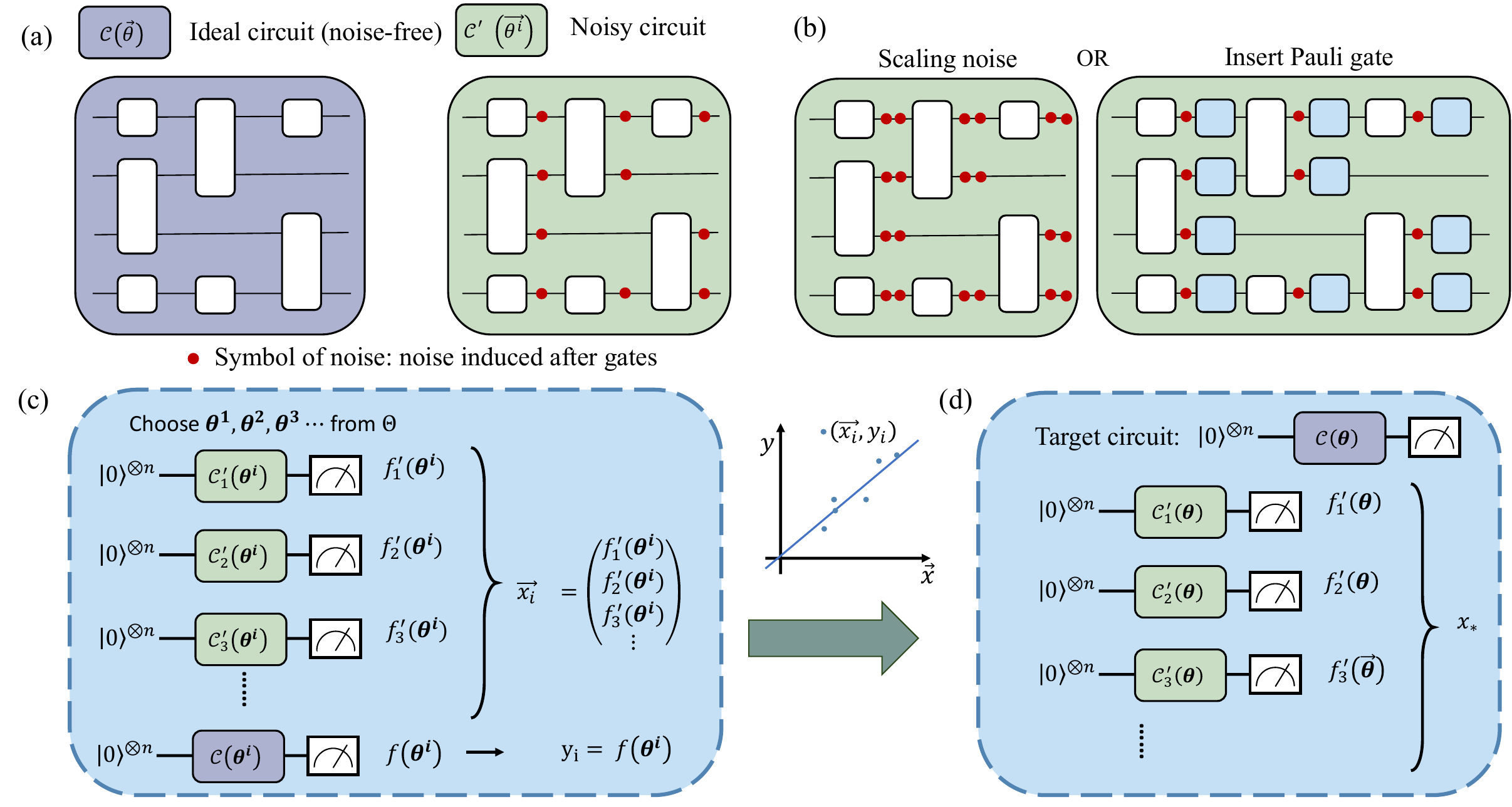}
 \caption{\justifying
 The framework of our Clifford Perturbation data regression~(CPDR). 
 (a) Ideal circuit model and noise circuit model. 
 (b) Noisy circuits used to estimate the noise-free expectation value: In CPDR-ZNE, the noisy circuits are generated by artificially amplifying the noise level, while in CPDR-PEC, noisy circuits are generated by inserting Pauli gates into the target circuit.
 (c) The training set construction process in the CPDR framework: The circuits in the training set share the same structure as the target circuit, with rotation angles selected from a constrained space $\Theta$. The noiseless reference expectation values are then obtained using the SPD method.
 (d) The mapping between the noisy expectation value and the noise-free expectation value is learned by fitting the training set, and this map is subsequently applied to the target circuit to mitigate the noise.}%
 \label{fig:whole_picture_MPO_QEM}
\end{figure*}

\subsection{Clifford Perturbation data regression framework}

In this subsection, we explain the CPDR framework as shown in Fig.~\ref{fig:whole_picture_MPO_QEM}. 
The training set construction is a crucial step for learning-based error mitigation protocols~\cite{czarnik2021error,strikis2021learning,lowe2021unified,czarnik2022improving}, which requires a collection of classically simulatable circuits with noiseless expectation values.
In previous works, Clifford circuits are commonly used in vnCDR and learning-based PEC protocol.
However, their sparsity in the unitary space limits their ability to approximate target circuits accurately.
To address this, we enhance the training set by incorporating Clifford perturbation circuits.

Specifically, for any quantum circuit $\mathcal{C}(\bm{\theta})$, the rotation angles $\bm{\theta}=(\theta_1,\cdots,\theta_L)$ can be assumed to be in range $[-\frac{\pi}{4},\frac{\pi}{4}]$.
We denote the noiseless expectation value as $f(\mathcal{C}(\bm{\theta}),0)$, and the noisy expectation value as $f(\mathcal{C}(\bm{\theta}),\lambda)$, where $\lambda$ represents the base noise rate of quantum device. 
The circuits in the training set share the same structure as the target circuit $\mathcal{C}(\bm{\theta})$, but their rotation angles are constrained to a small range. The training set can be expressed as:
\begin{equation}
  \left\{\mathcal{C}(\bm{\theta}^k)\mid \abs{\theta^k_i}\leq \theta_*, i=1,\cdots,L\right\}_{k=1,\cdots,K},
\end{equation}
where $\theta_*$ is a small constant and $K$ is the number of circuits in training set.
As shown in Theorem \ref{thm:truncation_error}, when $\theta_*\leq \frac{\ln{1+\frac{\delta}{2}}}{L-M}$ the truncation error satisfies $\abs{f(\mathcal{C}(\bm{\theta}^k),0) - \langle O \rangle ^{(M)}(\bm{\theta}^k)}\leq \delta$, where $\langle O \rangle ^{(M)}(\bm{\theta}^k)$ represents the truncated approximate expectation for the Clifford perturbation circuit $ \mathcal{C}(\bm{\theta}^k) $, as defined in Eq.~\eqref{eq:sim_noiseless}.
Therefore, when taking appropriate $\theta_*$, we can employ the SPD method to efficiently approximate the noiseless expectation value of the circuits in training set with a controllable truncation error.

By selecting a set of noise levels $\Lambda=\{\lambda_1,\cdots,\lambda_l\}$, a candidate of noiseless estimator is given by $\widehat{\langle O \rangle}(\bm{\theta})=\sum_{i=1}^{l} c_i f(\mathcal{C}(\bm{\theta}),\lambda_i)$, where $\{c_i\}$ are the coefficients to be determined from the training set by solving the least-square problem:
\begin{equation}\label{eq:L2_loss_clifford_perturbation}
  \begin{aligned}
  \mathrm{arg}\operatorname*{min}_{c_1,\cdots,c_l} &\sum_{k=1}^{K} \left(\sum_{i=1}^{l} c_i f(\mathcal{C}(\bm{\theta}^k),\lambda_i)-\langle O \rangle^{(M)}(\bm{\theta}^k)\right)^2\\ &+ \alpha \sum_{i = 1}^l c_i^2,
  \end{aligned}
  \end{equation}
We refer to this protocol as the Clifford Perturbation data regression Zero Noise Extrapolation~(CPDR-ZNE).
To achieve better generalization and reduce over-fitting, we use Ridge Regression~\cite{horel1962application,hoerl1970ridge,mcdonald2009ridge}, which introduces an $L^2$ regularization term to the loss function.
Where $\alpha$ is a hyperparameter of the regularization term.

We also enhanced the learning-based PEC method by incorporating the Clifford perturbation circuits into the training set, naming it Clifford Perturbation data regression probabilistic error cancellation~(CPDR-PEC).

By selecting a set of operations to insert Pauli gates $\{g_1,\cdots,g_W\}$, where $g_1$ represents no additional gate insertion, modified circuits are generated as $\{g_1(\mathcal{C}(\bm{\theta})),\cdots,g_W(\mathcal{C}(\bm{\theta}))\}$.
The noiseless estimator is given by a linear transformation of the noisy expectation values of these circuits $\widehat{\langle O \rangle}(\bm{\theta})=\sum_{w=1}^{W} c_w f(g_w(\mathcal{C}(\bm{\theta})),\lambda)$, where $\{c_w\}$ are the coefficients determined solving the following optimization problem:
\begin{equation}\label{eq:least_square_clifford_perturbation_PEC}
 \mathrm{arg}\operatorname*{min}_{c_1,\cdots,c_l} \sum_{k=1}^{K} \left(\sum_{w=1}^{W} c_w g_w(\mathcal{C}(\bm{\theta}^k))-\langle O \rangle^{(M)}(\bm{\theta}^k)\right)^2.
\end{equation}

Our protocol differs from previous learning-based methods by using Clifford perturbation circuits instead of Clifford circuits for the training set.
Since both Clifford and near-Clifford circuits are efficiently simulatable via SPD, our approach inherently extends prior methods. Near-Clifford circuits offer better generalization capabilities, thereby the accuracy is improved with this modification, as shown in Sec.~\ref{sec:Numerical}.

\section{Numerical benchmarks}
\label{sec:Numerical}

In this section, we first benchmark the SPD simulator on hardware-efficient circuits proposed in Ref.~\cite{sim2019expressibility}. 
We validate Theorem \ref{thm:truncation_error} and demonstrate that, numerically, the SPD approach is able to maintain high precision over a broader range of rotation angles in PQCs.
Next, we numerically compare CPDR-ZNE and CPDR-PEC methods with original ZNE and learning-based PEC, using Trotterized time evolution circuits.  The results clearly indicate the advantages of our approach.
Finally, we apply CPDR-ZNE method to the experimental data from IBM's 127 qubits Eagle processor, as reported in Ref.~\cite{kim2023}, illustrating the usability and effectiveness of our protocol on large-scale quantum hardware platforms.

\subsection{Benchmarks of SPD}

\begin{figure}[!ht]
    \centering
\includegraphics[width=0.25\textwidth]{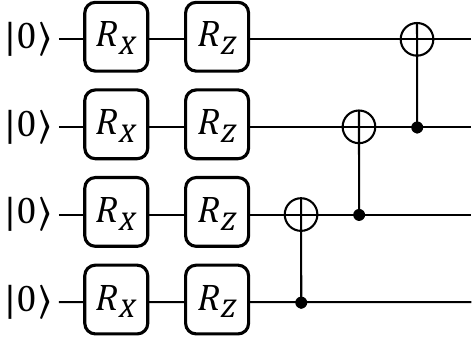}
    \caption{\justifying
    The architecture of the block in the hardware-efficient circuit.}
\label{fig:hardware_block}
\end{figure}

\begin{figure}[!ht]
    \centering
\includegraphics[width=0.48\textwidth]{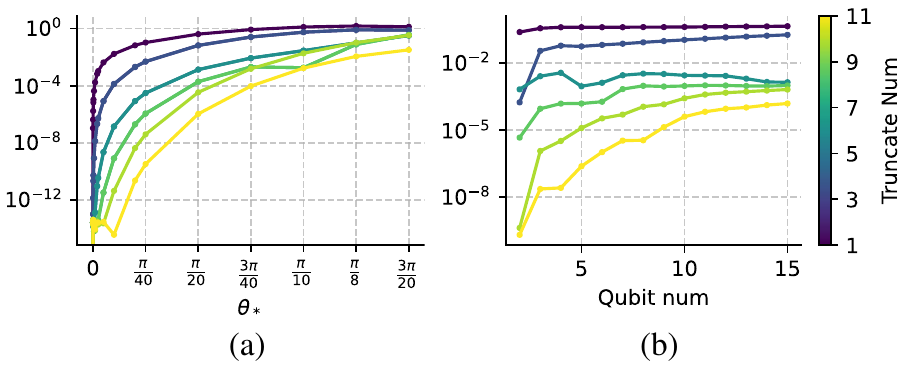}
    \caption{\justifying 
    Benchmarks for the SPD simulator with hardware efficient circuits, which consist $5$ blocks shown in Fig.~\ref{fig:hardware_block}. The error is defined as the difference $\abs{\langle O \rangle - \langle O \rangle^{(M)}}$, where the rotation angle of each gate within each block is uniform and denoted as $\theta_*$.
    (a) shows the error for circuits with a fixed qubit number $n=15$, plotted against the rotation angle $\theta_*$.
    (b) illustrates the error for a given rotation angle $\theta_*=\pi/20$, with respect to the number of qubits $n$.}
\label{fig:bench_mini_AQC}
\end{figure}

The hardware-efficient circuits consist of multiple repeated blocks, with the architecture of each block shown in Fig.~\ref{fig:hardware_block}.
In our setting, we configure the circuits to contain five blocks, and the number of qubits ranges from 2 to 15. Additionally, the rotation angles for each gate are identical throughout the circuits and are located in the range $[0,\pi/4]$.

In Theorem~\ref{thm:truncation_error}, we demonstrated that if the maximum angle is restricted to $\theta_*=\frac{\ln{1+\frac{\delta}{2}}}{L-M}$, then the truncated error can be bounded by $\abs{\langle O \rangle - \langle O \rangle ^{(M)}}\leq \delta$ for angles within the small-angle space $\bm{\theta}\in\Theta=\{\bm{\theta}\mid \abs{\theta_i}\leq \theta_* ,i=1,\cdots,L \}$.
For attaining an accuracy of $10^{-2}$, the analytical result suggests that $\theta_*$ should be approximately $\ln{1+\frac{10^{-2}}{2}}\approx 0.005$, indicating a very narrow interval.
This presents a difficult in preparing sufficient training data within $\Theta$ for the CPDR framework.
However, when the truncation number $M$ is moderately large (e.g $M=5,7$), the numerical results remain highly accurate over a much broader range of angles, rendering this approach suitable for practical scenarios.

As depicted in Fig.~\ref{fig:bench_mini_AQC}(a), when $\theta_*\leq \pi/20$, the truncated error decreases rapidly as the truncation number $M$ increases. 
For an accuracy of $10^{-2}$, $\theta_*\leq \pi/20$ is sufficient for $M\geq 5$.
Furthermore , when $M=11$, the truncated error remains within $10^{-2}$ until $\theta_*=\pi/8$. 
This indicates that slightly larger values of $M$ can considerably enhance the accuracy beyond the theoretical bound in practical scenarios.

To assess scalability for large-scale quantum circuits, we evaluate the truncation error $\abs{\langle O \rangle - \langle O \rangle ^{(M)}}$ for varying truncation number $M$ and qubit numbers ranging in $[2,15]$.
Considering the SPD simulator is employed to provide training data for the CPDR framework, we set the angle of each rotation gate to an appropriate value for training data preparation.
Based on empirical results from numerical tests, we recommend setting $\theta_* = \pi/20$ and obtaining training data from $\Theta$ to balance the training set size and simulation accuracy.
As shown in Fig.~\ref{fig:bench_mini_AQC}(b), as the number of qubits increases, the error keeps below $10^{-2}$ across $M\geq 5$, indicating that the SPD simulator is scalable for large-scale quantum circuits.

\subsection{Numerical results of CPDR framework}

\begin{figure*}[!ht]
    \centering
    \includegraphics[width=0.98\textwidth]{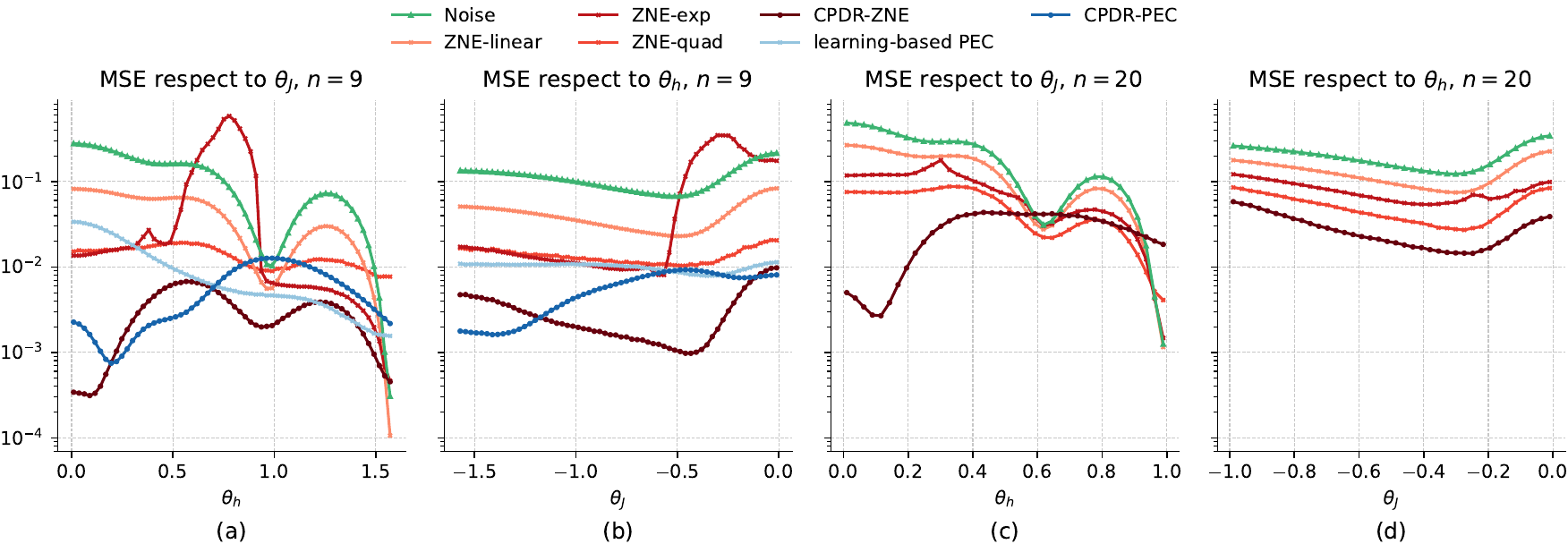}
    \caption{\justifying
    Comparisons are made between ZNE~(with linear, quadratic and exponential fitting function, denoted as ``ZNE-linear'', ``ZNE-quad'' and ``ZNE-exp''), learning-based PEC, CPDR-ZNE and CPDR-PEC on the time evolution circuits of the Ising model. 
    The expectation values from the noise circuit are also presented for comparison, labeled as ``noise''.
    For circuit with $9$ qubits and $5$ Trotter steps, (a) displays the mean squared error (MSE) over $\theta_J$, defined as $\mathbb{E}_{\theta_J}\abs{\langle O \rangle-\widehat{\langle O \rangle}}^2$, for various $\theta_h$. (b) presents the MSE over $\theta_h$ for different $\theta_J$.
    The corresponding results for circuit with $20$ qubits $8$ Trotter steps are shown in (c) and (d).
    }
    \label{fig:estimate_MSE_Ising_10_20}
\end{figure*}

In this subsection, we benchmark the CPDR-ZNE and CDPR-PEC methods with the original ZNE and learning-based PEC protocols by Trotterized time evolution circuits of the Ising model.

The Hamiltonian of the Ising model is given by
\begin{equation}\label{target_H}
  H = - \sum_{\langle i, j
\rangle \in E}J_{ij} Z_i Z_j +  \sum_i h_iX_i,
\end{equation}
where $i < j$, $ \langle \, , \, \rangle$ represents nearest-neighbour spin pair, $J_{ij}> 0$ is the coupling constant, and $h_i$ denotes the global transverse field. The set \( E \) represents the edges of the lattice.

The time dynamics of the Ising model is governed by the Schrödinger equation:
\begin{equation}
  \begin{aligned}
  \frac{\partial}{\partial t} \ket{\psi(t)} = -i H \ket{\psi(t)}.
  \end{aligned}
\end{equation}
where $\ket{\psi(t)}$ represents the state of the system at time $t$, the initial state is set to $\ket{\psi(0)} = \ket{0}^{\otimes n}$.
By discretizing the evolution time $T$ into $N$  steps, the time evolution can be simulated in quantum circuits using the first-order Trotter decomposition of the time-evolution operator:
\begin{equation}\label{eq:time_evolution}
\small
  \begin{aligned}
    \ket{\psi(T)}&=\prod^N_{k=1} \exp{-i \frac{T}{N} \cdot H} \ket{\psi(0)}\\
    &\approx\prod^N_{k=1}\left(\prod_{\langle i, j\rangle \in E} \mathrm{e}^{i \frac{J_{ij}T}{N} Z_i Z_j}\prod_i\mathrm{e}^{-i \frac{h_iT}{N} X_i }\right) \ket{\psi(0)}\\
  &=\prod^N_{k=1}\left(\prod_{\langle i, j\rangle \in E} \mathrm{R}_{Z_i Z_j}(-\frac{2 J_{ij}T}{N})\prod_i \mathrm{R}_{X_i}(\frac{2Th_i}{N}) \right)\ket{\psi(0)}, \\
  \end{aligned}
\end{equation}
where $\mathrm{R}_{Z_i Z_j}$ and $\mathrm{R}_{X_i}$ denotes the $Z Z$ and $X$ rotation gates, respectively. 
For simplicity, we consider a one-dimensional Ising model on a 1D chain, where $E = \{\langle i,i+1\rangle: i = 1,2,\cdots, n-1\}$. 
We assume that $h_i = h$ for all qubits $i$, and $J_{i,j} = J$ for each pair $\langle i, j\rangle\in E$, with $ J > 0 $ and $ h > 0 $.
Consequently, the rotation angles for all $\mathrm{R}_{X}$ gates are identical and denoted by $\theta_h = \frac{2Th}{N}$, while the rotation angles for all $\mathrm{R}_{ZZ}$ gates are also identical, denoted by $\theta_J = -\frac{2JT}{N}$. 
In our simulations, the CPDR training set consists of circuits with $\theta_h \in [0, \pi/20] \cup [9\pi/20, \pi/2] $ and $ \theta_J \in [-\pi/20, 0] \cup [-\pi/2, -9\pi/20] $.
The observable is defined as $O = \sum_{i = 1}^n Z_i /n$, and the expectation of this observable, denoted $ M_z $, represents the global magnetization along the $ \hat{z} $-axis.

We adopt a gate-based noise model that closely resembles the noise characteristics of realistic quantum devices.
For each gate, we introduce the depolarizing noise following the gate operation and account for thermal relaxation during the gate's duration. We assume the coherence times of each qubit are $T_1 = 100 \, \mu s$, $T_2 = 50 \, \mu s$. For single-qubit gates, the depolarizing noise intensity is $\lambda_{\text{single}}=0.01$ and the gate operation time is $t_{\text{gate}} = 300 \, ns$. For two-qubit gates, the depolarizing noise intensity is $\lambda_{\text{double}}=0.04$ and the gate operation time is $t_{\text{gate}} = 800 \, ns$. 
We assume that the noise in this model represents the base noise level, characterized by constant value $ \lambda$.

We apply both CPDR-ZNE and CPDR-PEC protocol to the Ising model time evolution circuits under the noise model described above.
In comparison, we also apply ZNE and learning-based PEC to the same noisy circuits.
The noise levels applied in the CPDR-ZNE and ZNE protocol are scaled using factors $ G = 1, 1.2, 1.6$ as outlined in the mitigation experiment~\cite{kim2023evidence,PhysRevA.106.062436}, so that $ \Lambda = \{\lambda, 1.2\lambda, 1.6\lambda\}$. 
For both CPDR-PEC and learning-based PEC, we selected $20$ configurations for single Pauli-X or Pauli-Z gate inserted.
The training set for the learning-based PEC includes 2048 Clifford circuits. 
In all cases, the shot count is set to $ 10^4 $. More details are provided in the Suppl. Mat.~\ref{sec:CPDR}.

We first consider a circuit with qubit $n=9$ and Trotter step $N = 5$, and the results are shown in Fig.~\ref{fig:estimate_MSE_Ising_10_20}(a) and (b).
The methods in CDPR-framework exhibit a lower MSE, demonstrating an obvious advantage over other QEM protocols.
Subsequently, another benchmark is performed on a circuit with qubit $n=20$ and Trotter step $N=8$.
Given the relatively steep resource costs associated with learning-based PEC and CPDR-PEC, we exclude these from the comparison for this instance.
The outcomes are shown in Fig.~\ref{fig:estimate_MSE_Ising_10_20}(c) and (d).
Despite the increased circuit size, CPDR-ZNE still exhibits a clear advantage, suggesting that our protocol is scalable.

Furthermore, we benchmark circuits with 8 and 10 qubits under varying base noise rates. Both CPDR-ZEN and CPDR-PEC demonstrate superior accuracy across these tests. 
Additionally, since the CPDR protocols rely on noiseless expectations obtained via SPD for constructing the training set, potential inaccuracies in SPD could introduce errors. 
To investigate this, we examine the impact of simulation errors on the accuracy of the CPDR protocols and found that the CPDR protocols are robust against such errors, with no significant reduction in accuracy.
More details please see Suppl. Mat.~\ref{sec:ap:Numerical}.

\subsection{Experimental results on IBM's Eagle processor}

\begin{figure*}[!ht]
    \centering
    \includegraphics[width=0.98\textwidth]{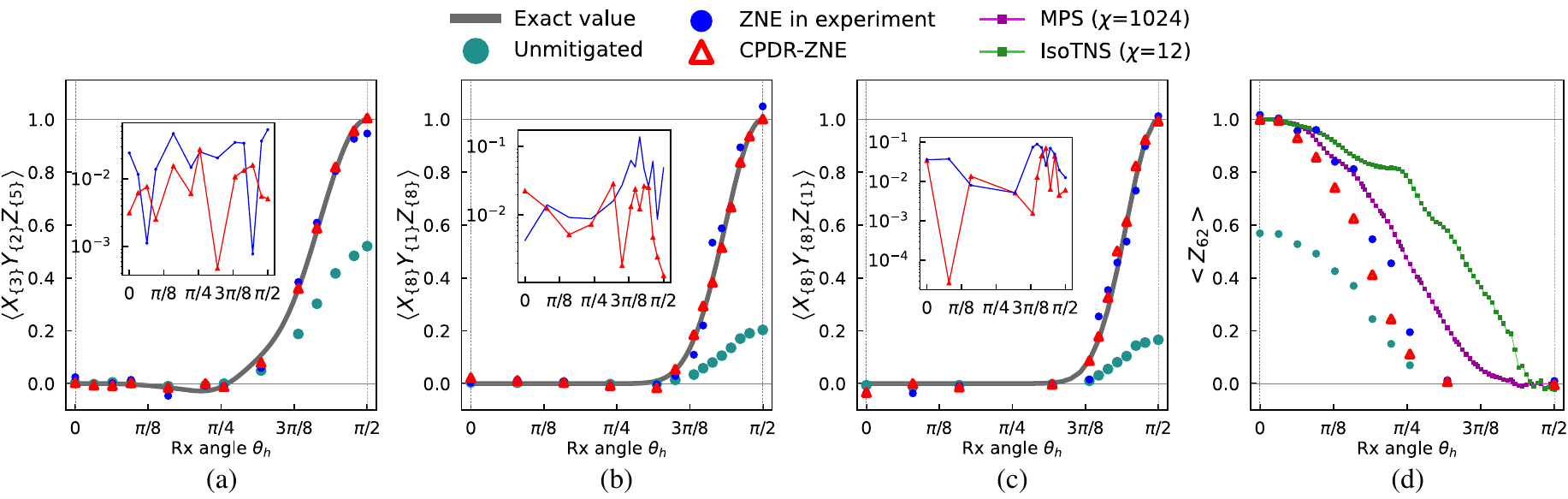}
    \caption{\justifying
    The comparison between the CPDR-ZNE and the mitigation protocol employed in Ref.~\cite{kim2023evidence}(labeled as ``ZNE in experiment''), which automatically selects between exponential, linear fits and noise expectation based on measurement results. 
    The raw experimental data from Ref.~\cite{kim2023} includes noisy expectation values obtained from IBM's Eagle processor, with varied noise scaling factors $G$, and we present the noise data with $G = 1$, labeled as ``Unmitigated''.
    In (a), the results for the circuit with $5$ Trotter steps and measurements on $X_{\{3\}} Y_{\{2\}} Z_{\{5\}}$ $\equiv X_{13,29,31} Y_{9,30} Z_{8,12,17,28,32}$ are shown.
    In (b), the results correspond to the circuit with $5$ Trotter steps and measurements on $X_{\{8\}} Y_{\{1\}} Z_{\{8\}}$ $\equiv X_{37,41,52,56,57,58,62,79} Y_{75} Z_{38,40,42,63,72,80,90,91}$.
    In (c), the results pertain to the circuit with $5$ Trotter steps and measurements on $X_{\{8\}} Y_{\{8\}} Z_{\{1\}}$ $\equiv X_{37,41,52,56,57,58,62,79} Y_{38,40,42,63,72,80,90,91} Z_{75}$, with an additional layer of $R_x\left(\theta_h\right)$ gates applied before measurement.
    In (d), the results are for the circuit with $20$ Trotter steps and measurements on $ Z_{62}$.
    For (a), (b) and (c), noiseless expectation values provided in Ref.~\cite{kim2023evidence} and Ref.~\cite{beguvsic2024fast} are included and labeled as ``Exact value''.
    The inset figures display the errors for both the IBM ZNE mitigation results (blue points) and the CPDR-ZNE results (red triangles).
    The x-axis in the inset figures correspond to $\theta_h$, and the y-axis represents the error. 
    For (d), since the exact value of the noiseless circuit is unavailable, we provide simulation results from isoTNS and MPS from Ref.~\cite{kim2023evidence} as alternatives.
    }
    \label{fig:estimate_AQC}
\end{figure*}

Ref.~\cite{kim2023evidence} demonstrates the experiment of implementing Trotterized time evolution circuits on the $127$ qubits IBM Eagle processor, for $\theta_h \in [0,\pi/2]$ and $\theta_J = -\pi/2$.
The interaction set $ E $ in the Hamiltonian, given in Eq.~\eqref{target_H}, is determined based on the topology of the IBM Eagle processor.
In this section, we examine four Trotterized time evolution circuits presented in Ref.~\cite{kim2023evidence}.
The circuit depths for these four configurations are $20,20,21,80$, and they are measured on Pauli observables with weights $10,17,17,1$, respectively.
For further details, please refer to Suppl. Mat.~\ref{sec:ap:Numerical:IBM}.

To estimate the noiseless expectation values, the experiment utilizes noise scaling factors \(G = 1, 1.3, 1.6\) for the third circuit and \(G = 1, 1.2, 1.6\) for the others, applying the ZNE which automatically selects between exponential, linear fits and noise expectation based on measurement results. 
We validate our CPDR-ZNE protocol by leveraging quantum device's experiment data as supplied in Ref.~\cite{kim2023}, which is the original dataset utilized in original mitigation protocol during the experiment.

For each circuit, we select the four parameter points, $\theta_h$, closest to the Clifford circuit to serve as the training set for CPDR-ZNE.
Specifically, these include the two points nearest to $0$ on the left and the two points nearest to $\frac{\pi}{2}$ on the right. 
The noiseless expectation references in training set are obtained by SPD.

Fig.~\ref{fig:estimate_AQC}~(a-c), which include the ideal noiseless expectations as references, demonstrate crucial improvements in accuracy when taking the CPDR-ZNE compared to the extrapolation mitigation used in the original experiments.
Fig.~\ref{fig:estimate_AQC}~(d) corresponds to a circuit with depth 80, the ideal expectations are not accessible. Consequently, only the results from the CPDR-ZNE and extrapolation mitigation methods are presented.
Compared to the extrapolation-based mitigation method, the CPDR-ZNE method consistently outperforms it across all three circuits in most cases. 
Importantly, the CPDR-ZNE protocol does not require additional physical resources to complete this task, demonstrating its ease of implementation. 
Furthermore, these results provide strong evidence of CPDR’s effectiveness in large-scale quantum hardware experiments.

\section{Conclusions and Discussions}
\label{sec:Discussion}

In this paper, we introduce the novel CPDR framework for learning-based error mitigation. 
This approach enhances the training set by incorporating Clifford perturbation circuits, which are parameterized quantum circuits with rotation gate angles constrained below a small threshold.

We utilize the SPD simulator to obtain ideal circuit expectation values in the training set. To ensure the simulation's accuracy, we provide a theoretical analysis, rigorously deriving the effective parameter range for reliable simulations.
Additionally, we conduct a numerical error analysis for the SPD with hardware-efficient circuits and demonstrate that the SPD method produces highly accurate computational results, even over a wide range of rotation angles.

In comparison to similar learning-based protocols, the CPDR framework admits that the parameters of the circuits in the training set are more closely aligned with those of the target circuit, resulting in more accurate error mitigation.
We develop the CPDR-ZNE and CPDR-PEC protocols based on the CPDR framework.
We highlight the advantages of our approach by numerically employing the CPDR protocols for the Trotter evolution circuits of the Ising model and comparing it with various QEM protocols. 
Finally, we apply the CPDR-ZNE protocol to the data from IBM’s 127-qubit Eagle processor and achieve better mitigation accuracy than the ZNE protocol employed in the original experiment~\cite{kim2023evidence}. This provides strong evidence of the effectiveness of our approach on large-scale quantum device.

In this work, we construct a training set with Clifford perturbation circuits. One particularly interesting avenue involves leveraging Matchgate circuits, which are known to be classically simulatable, incorporate into the training set. 
Another intriguing question concerns the broader applicability of the CPDR framework's Clifford perturbation circuit training set. Examining its effectiveness in other learning-based approaches, such as replacing the linear model with Random Forests or Multi-Layer Perceptrons, suggests a valuable research direction. 
Furthermore, exploring alternative input features, such as those beyond multiple noise-level expectation values and noisy expectation values of circuits with inserted Pauli gates, while training mitigation models using the CPDR dataset, is another meaningful avenue.
These investigations may have potential to improve the accuracy of QEM.

\textit{Note on Ref.~\cite{lerch2024efficient}:} Shortly before submitting this manuscript to the arXiv, we became aware of the work~\cite{lerch2024efficient}, which presents a similar result to Theorem~\ref{thm:truncation_error} in this manuscript. These two works were developed independently and are tailored to different applications.

\begin{acknowledgments}

We thank Zhenyu Chen for valuable discussions. 
R.Z., S.C., Z.W. and Z.L. were supported, in part, by the Beijing Natural Science Foundation under Grant No. Z220002; R.Z. and Z.W. were supported, in part, by the National Natural Science Foundation of China under Grant Nos. 62272259 and 62332009. Y.S., F.W. and Z.L. were supported, in part, by the BMSTC and ACZSP under Grant No.~Z221100002722017.
S.C. was supported by the National Science Foundation of China (Grant No. 12004205).
Z.L. was supported by NKPs (Grant No. 2020YFA0713000).

\end{acknowledgments}


\bibliographystyle{apsrev}
\bibliography{ref}

\clearpage
\widetext

\appendix
\section*{Supplementary Material}
\renewcommand{\thesection}{\Roman{section}}
\renewcommand{\appendixname}{Supplement Material}


\section{Clifford Perturbation Approximation}\label{ap:sec:CPA}

We consider the expectation value
\begin{equation}\label{eq:exp_val_m}
  \begin{aligned}
      \langle O \rangle =  \tr{{U}_L(\theta_L)  \cdots {U}_1(\theta_1) \rho  {U}_1(\theta_1)^{\dagger} \cdots  {U}_L(\theta_L)^{\dagger} O } 
  \end{aligned}
\end{equation}
of a n-qubit Pauli operator $O\in\{\mathbb{I},X,Y,Z\}^{\otimes n}$. Here, $\rho$ is the $n$-qubit initial state, and the sequence ${U}_L(\theta_L)  \cdots {U}_1(\theta_1)$ represents the quantum circuit. Each unitary ${U}_i(\theta_i):=C_i\exp{-i \theta_i P_i / 2} $ consists of a rotation on Pauli operator $P_i\in\{\mathbb{I},X,Y,Z\}^{\otimes n}$ with angle $\theta_i$ and a Clifford operator $C_i$.

In Heisenberg picture, the expectation value in Eq.~(\ref{eq:exp_val_m}) can be rewritten as $\langle O \rangle = \tr{\rho \tilde{O}}$, where $\tilde{O} = {U}_1(\theta_1)^{\dagger} \cdots  {U}_L(\theta_L)^{\dagger} O {U}_L(\theta_L)  \cdots {U}_1(\theta_1)$ is the Heisenberg-evolved observable. 
Thus, evaluating the expectation value is equivalent to applying the Heisenberg evolution to the observable $O$ and then measuring the initial state $\rho$. 

Specifically, an Heisenberg evolution of a Pauli operator $P'$ under ${U}(\theta):=C\exp{-i \theta P / 2}$ is given by ${U}(\theta)^{\dagger} P' {U}(\theta)= \exp{i \theta P / 2} C^\dagger P C\exp{-i \theta P / 2}$.
The Clifford $C$ maps Pauli operators to Pauli operators, $C^\dagger P' C=Q$ is a Pauli operator, and $Q$ can be efficiently computed. The application of a Pauli rotation to a Pauli operator yields:
\begin{equation}
e^{i \theta P / 2} Q e^{-i \theta P / 2} = 
\begin{cases}
Q, & [P, Q] = 0, \\
\cos(\theta) Q + i \sin(\theta) P Q & \{P, Q\} = 0.
\end{cases} \label{eq:evolution}
\end{equation}

When $Q$ anticommutes with $P$, the Pauli operator $Q$ is transformed into a linear combination of Pauli operators. We call those different Pauli operators as different path. Then $\cos(\theta) Q + i \sin(\theta) P Q$ can be viewed as sum of two paths.
To formalise evolution, we introduce the concept of Pauli paths.

A Pauli path is a sequence $s=(s_0,\cdots,s_L)\in \bm{P}^{L+1}_n$, where $\bm{P}_n=\{\sfrac{\mathbb{I}}{\sqrt{2}},\sfrac{X}{\sqrt{2}},\sfrac{Y}{\sqrt{2}},\sfrac{Z}{\sqrt{2}}\}^{\otimes n} $ represents the set of all normalized $n$-qubit Pauli words. 
The Pauli path $s$ is associated with a Heisenberg-evolved path. 
For example, at beginning, the $s_L$ is set to the normalized observable $\sfrac{O}{(\sqrt{2})^n}$ to record the evolution starts from $O$.
Then $s_{L-1}$ records the potential Pauli operator after the evolution of $s_L$ under the gate $U_L(\theta_L)=C_L\exp{-i \theta_i P_L / 2}$.
That is, if $P_L$ commutes with $C_L^\dagger s_L C_L$, then $s_{L-1}$ is $\sfrac{C_L^\dagger s_L C_L}{(\sqrt{2})^n}$.
Otherwise, in anticommute case, the $s_{L-1}$ has two potential choices, $\sfrac{C_L^\dagger s_L C_L}{(\sqrt{2})^n}$ or $\sfrac{P_L C_L^\dagger s_L C_L}{(\sqrt{2})^n}$ to record the different evolution paths. 
Iterating this process untill $s_0$ is obtained, the whole path $s$ describes a evolution path of the observable $O$. 
The phase in the above Pauli product is ignored, and it is records in the following path weight:
\begin{equation}
  \begin{aligned}\label{eq:f}
    f(s,\bm{\theta})=&\Tr{Os_L}\left(\prod_{i=1}^{L}\Tr{s_iU_i(\theta_i) s_{i-1}U_i(\theta_i)^\dagger}\right).
  \end{aligned}
\end{equation}
If $s$ is a path generated by the evolution of $O$, then $f(s,\bm{\theta})$ is the coefficient compose of $\sin$ and $\cos$, while if $s$ is not a path generated by the evolution of $O$, then $f(s,\bm{\theta})=0$.

Using the Pauli weight, Heisenberg-evolved observable can be expressed as 
\begin{equation}
  \begin{aligned}
    \tilde{O} &= {U}_1(\theta_1)^{\dagger} \cdots  {U}_L(\theta_L)^{\dagger} O {U}_L(\theta_L)  \cdots {U}_1(\theta_1)\\
    &=\sum_{s_0\in \bm{P}_n} \tr{{U}_1(\theta_1)^{\dagger} \cdots  {U}_L(\theta_L)^{\dagger} O {U}_L(\theta_L)  \cdots {U}_1(\theta_1)s_0} s_0\\
    &=\sum_{s_0\in \bm{P}_n} \tr{{U}_2(\theta_2)^{\dagger} \cdots  {U}_L(\theta_L)^{\dagger} O {U}_L(\theta_L)  \cdots {U}_1(\theta_1)s_0{U}_1(\theta_1)^{\dagger}} s_0\\
    &=\sum_{s_0,s_1\in \bm{P}_n} \tr{{U}_2(\theta_2)^{\dagger} \cdots  {U}_L(\theta_L)^{\dagger} O {U}_L(\theta_L)  \cdots {U}_2(\theta_2)s_1} \tr{s_1{U}_1(\theta_1)s_0{U}_1(\theta_1)^{\dagger}} s_0\\
    &\vdots\\
    &=\sum_{s\in \bm{P}^{L+1}_n}\Tr{Os_L}\left(\prod_{i=1}^{L}\Tr{s_iU_i(\theta_i) s_{i-1}U_i(\theta_i)^\dagger}\right) s_0\\
    &=\sum_{s\in \bm{P}^{L+1}_n} f(s,\bm{\theta}) s_0,
  \end{aligned}
\end{equation}
where the second and fourth equality holds because the $\bm{P}_n=\{\sfrac{\mathbb{I}}{\sqrt{2}},\sfrac{X}{\sqrt{2}},\sfrac{Y}{\sqrt{2}},\sfrac{Z}{\sqrt{2}}\}^{\otimes n} $ form a orthonormal basis.
Then the expectation value can be expressed as
\begin{equation}
  \begin{aligned}
    \langle O \rangle &= \tr{\rho \tilde{O}}\\
    &=\sum_{s\in \bm{P}^{L+1}_n} f(s,\bm{\theta}) \tr{\rho s_0}.
  \end{aligned}
\end{equation}

The number of Pauli path reflects the computational cost of evaluating the expectation value.
The worst-case scaling of this method is $2^L$, which is attained only if all $P_i$ anticommute with $s_i$. 
To effectively evaluate the expectation value, we truncate the number of sin terms in $f(s,\bm{\theta})$. We denote the set of Pauli paths with at most $M$ $\sin{\theta}$ terms in their Pauli weight as $\bm{P}^{L+1}_n(M)$, where $M$ is a positive integer.

Notation $\langle O \rangle ^{(M)}$ denotes the expectation value of the truncated Pauli weight set $\bm{P}^{L+1}_n(M)$, which can be expressed as
\begin{equation}
  \begin{aligned}
      \langle O \rangle ^{(M)} =\sum_{s\in \bm{P}^{L+1}_n(M)} f(s,\bm{\theta}) \tr{\rho s_0}.
  \end{aligned}
\end{equation}

\subsection{Error Analysis}

When rotation angles are small, the $\sin{\theta}$ terms in the Pauli weight are small, thus the truncations in $\sin{\theta}$ terms will provide a good approximation to the expectation value. To show this, we introduce the small angle space $\Theta=\{\bm{\theta}\mid \abs{\theta_i}\leq \theta_* ,i=1,\cdots,L \}$, and analysis the truncation error in the small angle space. 

\begin{lemma}
For any distinct Pauli paths $s,s^{\prime}\in \bm{P}^{L+1}_n$, and the small angle space $\Theta=\{\bm{\theta}\mid \abs{\theta_i}\leq \theta_* ,i=1,\cdots,L \}$, we have 
\begin{equation}\label{ap:eq:E_cross_equals_0}
\mathbb{E}_{\Theta}f(s,\bm{\theta})f(s',\bm{\theta})=0.
\end{equation}
\end{lemma}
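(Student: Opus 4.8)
The plan is to exploit the fact that, for a fixed Pauli path $s$, the weight $f(s,\bm\theta)$ from Eq.~\eqref{eq:f} factorizes completely across the angle variables. Each factor $\Tr{s_i U_i(\theta_i) s_{i-1} U_i(\theta_i)^\dagger}$ depends on $\theta_i$ alone, so I would write $f(s,\bm\theta)=\Tr{O s_L}\prod_{i=1}^L g_i^{(s)}(\theta_i)$, where $g_i^{(s)}$ is a single-variable function determined by the local step $(s_{i-1},s_i)$. Taking the uniform product measure on the box $\Theta=\prod_i[-\theta_*,\theta_*]$, the expectation of the product factorizes as
\begin{equation*}
\mathbb{E}_\Theta f(s,\bm\theta)f(s',\bm\theta)=\Tr{O s_L}\Tr{O s'_L}\prod_{i=1}^L \mathbb{E}_{\theta_i}\!\left[g_i^{(s)}(\theta_i)\,g_i^{(s')}(\theta_i)\right].
\end{equation*}
Hence it suffices to exhibit a single index $k$ at which the one-dimensional integral $\mathbb{E}_{\theta_k}[g_k^{(s)}g_k^{(s')}]$ vanishes.

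Next I would classify the single-variable factors. Rewriting $\Tr{s_i U_i s_{i-1} U_i^\dagger}=\Tr{(U_i^\dagger s_i U_i)s_{i-1}}$ and using Eq.~\eqref{eq:evolution}, the Clifford $C_i$ turns $s_i$ into a Pauli $Q=C_i^\dagger s_i C_i$, and then: if $[P_i,Q]=0$ the only nonzero factor is the constant $1$ at $s_{i-1}=Q$; if $\{P_i,Q\}=0$ there are exactly two nonzero branches, $s_{i-1}=Q$ with factor $\cos\theta_i$ and $s_{i-1}=P_iQ$ with factor $i\sin\theta_i$. Thus every $g_i^{(s)}$ lies in $\{0,1,\cos\theta_i,\pm i\sin\theta_i\}$. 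Over the symmetric interval $[-\theta_*,\theta_*]$, the functions $1$ and $\cos\theta_i$ are even while $\sin\theta_i$ is odd, so $\mathbb{E}_{\theta_i}[g_i^{(s)}g_i^{(s')}]=0$ precisely when exactly one of the two factors carries a $\sin$; the remaining cases (even$\,\times\,$even or odd$\,\times\,$odd) give a nonzero integral since $0<\theta_*\le\pi/4$. If either path is not generated by the evolution of $O$, some $g_i$ is identically $0$ and the claim is immediate, so I assume both paths are admissible.

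The crux is the combinatorial step of locating such an index $k$. Since $\Tr{O s_L}\neq0$ forces $s_L=s'_L=\sfrac{O}{(\sqrt2)^n}$, distinctness of $s,s'$ gives a well-defined largest index $m$ with $s_m\neq s'_m$, and by maximality $s_{m+1}=s'_{m+1}$. Setting $k=m+1$, the step has a common incoming operator $s_k=s'_k$ (hence a common $Q$) but distinct outgoing operators $s_{k-1}\neq s'_{k-1}$. By the classification, two distinct admissible outgoing operators at a single step can only arise in the anticommuting case, where they are exactly $Q$ and $P_kQ$; therefore one of $g_k^{(s)},g_k^{(s')}$ equals $\cos\theta_k$ and the other $\pm i\sin\theta_k$. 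The integrand $g_k^{(s)}g_k^{(s')}$ is then odd in $\theta_k$, its integral over $[-\theta_*,\theta_*]$ vanishes, and the whole product is killed.

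I expect the main obstacle to be making the ``largest differing index'' argument airtight, specifically verifying that a common incoming Pauli together with distinct outgoing Paulis forces the anticommuting (two-branch) scenario and pins the two branches to $\cos$ versus $\sin$. Once this local dichotomy is established, the parity of the integrand and the factorization over coordinates finish the proof.
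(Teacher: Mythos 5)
Your proposal is correct and follows essentially the same route as the paper's proof: both exploit the factorization of the expectation over the independent uniform angles, force $s_L=s'_L$ via $\Tr{Os_L}\Tr{Os'_L}$, and locate the first step (from the observable side) where the two paths diverge, observing that a common incoming Pauli with distinct outgoing Paulis can only occur in the anticommuting branch, producing an odd factor $\cos\theta_k\sin\theta_k$ whose integral over $[-\theta_*,\theta_*]$ vanishes. The paper phrases this as a top-down iteration concluding $s=s'$, while you phrase it via the largest differing index; these are the same argument, and your explicit classification of the single-variable factors makes the ``airtight'' step you worried about go through without issue.
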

\begin{proof}
    This lemma can be proved by following the evolution process. By Eq.~\eqref{eq:f}, we have
    \begin{equation}
f(s,\bm{\theta})f(s',\bm{\theta})=\Tr{Os_L}\Tr{Os'_L}\left(\prod_{i=1}^{L}\Tr{s_iU_i(\theta_i) s_{i-1}U_i(\theta_i)^\dagger}\Tr{s'_iU_i(\theta_i) s'_{i-1}U_i(\theta_i)^\dagger}\right).
    \end{equation}
    First, there must be $s_L=s'_L$ since the initial observable is the same, if not, the factor $\Tr{Os_L}\Tr{Os'_L}$ is zero. 
    Then, we consider the evolution of $s_{L-1}$ and $s'_{L-1}$ under the operator $U_L(\theta_L)=C_L\exp{-i \theta_i P_L / 2}$. 

    If $P_L$ commutes with $C_L^\dagger s_L C_L$, by Eq.~\eqref{eq:evolution}, we have $U_L(\theta_L)^\dagger s_L U_L(\theta_L)=U_L(\theta_L)^\dagger s'_L U_L(\theta_L) = C_L^\dagger s_L C_L$, then $s_{L-1}=s'_{L-1}$, otherwise the factor $\Tr{s_LU_L(\theta_L) s_{L-1}U_L(\theta_L)^\dagger}\Tr{s'_LU_L(\theta_L) s'_{L-1}U_L(\theta_L)^\dagger}$ is zero.
    
    If $P_L$ anticommutes with $C_L^\dagger s_L C_L$, by Eq.~\eqref{eq:evolution}, $s_{L-1}$ and $s'_{L-1}$ can be $C_L^\dagger s_L C_L$ or $P_L C_L^\dagger s_L C_L$, otherwise $\Tr{s_LU_L(\theta_L) s_{L-1}U_L(\theta_L)^\dagger}\Tr{s'_LU_L(\theta_L) s'_{L-1}U_L(\theta_L)^\dagger}=0$. If $s_{L-1}\neq s'_{L-1}$, means that there is factor $\cos{\theta_L}\sin{\theta_L}$ in the product $f(s,\bm{\theta})f(s',\bm{\theta})$, which is zero because
    \begin{equation}
        \int_{-\theta_*}^{\theta_*}\cos{\theta_L}\sin{\theta_L}\mathrm{d}\theta_L=0.
    \end{equation}
    Thus, we must have $s_{L-1}=s'_{L-1}$. Repeating this process, we have $s=s'$ otherwise $\mathbb{E}_{\Theta}f(s,\bm{\theta})f(s',\bm{\theta})=0$, which completes the proof.
\end{proof}

\begin{lemma}
    If observable $O$ is a Pauli operator, the mean square error between $\langle O \rangle$ and $\langle O \rangle ^{(M)}$ in the small angle space $\Theta$ is upper bounded by
    \begin{equation}
        \mathbb{E}_{\Theta}\left(\langle O \rangle - \langle O \rangle ^{(M)}\right)^2 \leq  \left(1+c\right)^L - \left(1+c\right)^M,
    \end{equation}
    where $c=\frac{\theta_*^2}{3}+\order{\theta_*^4}$ is a constant.
\end{lemma}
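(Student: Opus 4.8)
The plan is to express the truncation error as a sum over exactly the discarded Pauli paths, square it, average over $\Theta$, and use the orthogonality already proved in Eq.~\eqref{ap:eq:E_cross_equals_0} to annihilate every cross term, leaving a diagonal sum that factorizes step by step. Writing $T_M=\bm{P}^{L+1}_n\setminus\bm{P}^{L+1}_n(M)$ for the set of Pauli paths carrying strictly more than $M$ factors of $\sin$, we have $\langle O\rangle-\langle O\rangle^{(M)}=\sum_{s\in T_M}f(s,\bm{\theta})\tr{\rho s_0}$. Squaring and applying $\mathbb{E}_\Theta$, the preceding lemma forces every $s\neq s'$ contribution to vanish, so
\begin{equation}
\mathbb{E}_\Theta\left(\langle O\rangle-\langle O\rangle^{(M)}\right)^2=\sum_{s\in T_M}\mathbb{E}_\Theta\!\left[f(s,\bm{\theta})^2\right](\tr{\rho s_0})^2 .
\end{equation}

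Next I would evaluate the diagonal second moment. Since each gate $U_i$ is used exactly once along a path, $f(s,\bm{\theta})$ equals the prefactor $\Tr{Os_L}=2^{n/2}$ times a product of $L$ per-step factors, each equal to $1$ at a commuting step, to $\cos\theta_i$ at an anticommuting ``stay'' step, and to $\pm\sin\theta_i$ at an anticommuting ``flip'' step by Eq.~\eqref{eq:evolution} (the explicit $i$ there combines with the anti-Hermiticity of $P_iQ$ to leave a real weight, consistent with the phase being folded into $f$ in Eq.~\eqref{eq:f}). Because the $\theta_i$ are independent and uniform on $[-\theta_*,\theta_*]$, the average of $f(s,\bm{\theta})^2$ factorizes into single-angle integrals, the only nontrivial ones being $\mathbb{E}[\sin^2\theta]=\tfrac12-\tfrac{\sin 2\theta_*}{4\theta_*}=\tfrac{\theta_*^2}{3}+\order{\theta_*^4}=:c$ and $\mathbb{E}[\cos^2\theta]=1-c$. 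Hence, for a path with $b_s$ flips and $a_s$ cosine factors, $\mathbb{E}_\Theta[f(s,\bm{\theta})^2]=2^n(1-c)^{a_s}c^{b_s}$. The $2^n$ prefactor is exactly absorbed by writing $2^n(\tr{\rho s_0})^2=(\tr{\rho\hat s_0})^2$ in terms of the unnormalized Pauli $\hat s_0=2^{n/2}s_0$; since $\rho$ is a density matrix, $\abs{\tr{\rho\hat s_0}}\le 1$, and together with $(1-c)^{a_s}\le 1$ this gives the clean per-path estimate $\mathbb{E}_\Theta[f(s,\bm{\theta})^2](\tr{\rho s_0})^2\le c^{b_s}$.

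Finally I would reorganize the remaining sum by the number $k=b_s$ of flips and count paths. A path is completely determined by the subset of steps at which a flip occurs: once the flip positions are fixed, Eq.~\eqref{eq:evolution} forces each step's image, so at most one consistent path corresponds to each subset and the number of paths with exactly $k$ flips is at most $\binom{L}{k}$. Combining this with the per-path estimate yields $\sum_{s\,:\,b_s=k}\mathbb{E}_\Theta[f(s,\bm{\theta})^2](\tr{\rho s_0})^2\le\binom{L}{k}c^{k}$, whence
\begin{equation}
\mathbb{E}_\Theta\left(\langle O\rangle-\langle O\rangle^{(M)}\right)^2\le\sum_{k=M+1}^{L}\binom{L}{k}c^{k}=(1+c)^L-\sum_{k=0}^{M}\binom{L}{k}c^{k}.
\end{equation}
Using $\binom{L}{k}\ge\binom{M}{k}$ for $L\ge M$ and $c\ge 0$, the subtracted sum is at least $\sum_{k=0}^{M}\binom{M}{k}c^{k}=(1+c)^M$, which gives the stated bound $(1+c)^L-(1+c)^M$. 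I expect the two delicate points to be the normalization bookkeeping that makes the $2^n$ prefactor cancel against $(\tr{\rho s_0})^2$, and the counting claim: because the commutation structure at each step is path-dependent, one must argue that fixing the flip set leaves at most one consistent path rather than naively counting the $2^L$ leaves of a full binary tree.
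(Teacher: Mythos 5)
Your proposal is correct and follows essentially the same route as the paper's proof: expand the error over the discarded Pauli paths, kill cross terms with the orthogonality lemma, bound each diagonal term by $c^{b_s}$, count paths with $k$ sine factors by $\binom{L}{k}$, and sum the binomial tail. The only differences are cosmetic — you compute the exact factorized second moment $2^n(1-c)^{a_s}c^{b_s}$ before relaxing it (the paper reaches the same per-path bound via H\"older), you make the final step $\sum_{l=0}^{M}\binom{L}{l}c^l\ge(1+c)^M$ explicit where the paper leaves it implicit, and your $\mathbb{E}[\sin^2\theta]=\tfrac12-\tfrac{\sin 2\theta_*}{4\theta_*}$ is the correctly normalized version of the paper's expression.
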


\begin{proof}
    By the definition of the expectation value, we have
    \begin{equation}
        \begin{aligned}
        \mathbb{E}_{\Theta}\left(\langle O \rangle - \langle O \rangle ^{(M)}\right)^2 &= \mathbb{E}_{\Theta}\left(\sum_{s\in\bm{P}^{L+1}_n(L)/\bm{P}^{L+1}_n(M)} f(s,\bm{\theta}) \tr{\rho s_0}\right)^2\\
        &=\sum_{s,s'\in\bm{P}^{L+1}_n(L)/\bm{P}^{L+1}_n(M)} \mathbb{E}_{\Theta}\left(f(s,\bm{\theta})f(s',\bm{\theta}) \tr{\rho s_0}\tr{\rho s'_0}\right)\\
        &=\sum_{s\in\bm{P}^{L+1}_n(L)/\bm{P}^{L+1}_n(M)} \mathbb{E}_{\Theta}\left(f(s,\bm{\theta}) \tr{\rho s_0}\right)^2,
      \end{aligned}
    \end{equation}
    where the last equality holds by Eq.~\eqref{ap:eq:E_cross_equals_0}.

    By Hölder's inequality $\abs{\tr{A^\dagger B}}\leq \norm{A}_p\norm{B}_q$ for $p,q>0$ satisfies $\frac{1}{p}+\frac{1}{q}=1$, we have
    \begin{equation}
        \begin{aligned}
        (f(s,\bm{\theta})\tr{\rho s_0})^2=&\Tr{Os_L}^2\tr{\rho s_0}^2\left(\prod_{i=1}^{L}\Tr{s_iU_i(\theta_i) s_{i-1}U_i(\theta_i)^\dagger}\right)^2\\
        \leq & (\norm{O}_2\norm{s_L}_2 \norm{\rho}_1\norm{s_0}_\infty)^2  \left(\prod_{i=1}^{L}\Tr{s_iU_i(\theta_i) s_{i-1}U_i(\theta_i)^\dagger}\right)^2\\
        =& \left(\prod_{i=1}^{L}\Tr{s_iU_i(\theta_i) s_{i-1}U_i(\theta_i)^\dagger}\right)^2,
        \end{aligned}
    \end{equation}
    where the last equality holds because $\norm{O}_2=\sqrt{\tr{OO^\dagger}}=\sqrt{2^n}$, $\norm{\rho}_1=\tr{\rho}=1$, $\norm{s_L}_2=\sqrt{\tr{s_L^\dagger s_L}}=1$, $\norm{s_0}_\infty=\left(\frac{1}{\sqrt{2}}\right)^2$.

    The set $\bm{P}^{L+1}_n(l)/\bm{P}^{L+1}_n(l-1)$ denotes the set of Pauli paths with exact $l$ $\sin{\theta}$ terms in their Pauli weight. The number size of set $\bm{P}^{L+1}_n(l)/\bm{P}^{L+1}_n(l-1)$ is upper bounded by $\binom{L}{l}$, because there are totally $L$ gates. 

    If we denote constant $c$ as 
    \begin{equation}
        c=\mathbb{E}_{\abs{\theta}\leq \theta_*}\sin^2{\theta}=\frac{1}{2\theta_*} \int_{-\theta_*}^{\theta_*}(\sin^2{\theta})\mathrm{d}\theta= \frac{1}{2}-\frac{\sin{2\theta_*}}{2\theta_*}=\frac{ \theta_*^2}{3}+\order{\theta_*^4}.
    \end{equation}

    Then for any $s\in\bm{P}^{L+1}_n(l)/\bm{P}^{L+1}_n(l-1)$, we have
    \begin{equation}
        \begin{aligned}
        \mathbb{E}_{\Theta}\left(f(s,\bm{\theta}) \tr{\rho s_0}\right)^2 &\leq \mathbb{E}_{\Theta}\left(\prod_{i=1}^{L}\Tr{s_iU_i(\theta_i) s_{i-1}U_i(\theta_i)^\dagger}\right)^2\\
        &= \prod_{i=1}^{L}\mathbb{E}_{\abs{\theta_i}\leq \theta_*}\Tr{s_iU_i(\theta_i) s_{i-1}U_i(\theta_i)^\dagger}^2\\
        & \leq c^{l},
        \end{aligned}
    \end{equation}
    where the inequality holds by $\Tr{s_iU_i(\theta_i) s_{i-1}U_i(\theta_i)^\dagger}^2$ equals to $\cos^2{\theta_i}$ or $\sin^2{\theta_i}$ or $1$, and the expectation value of $\mathbb{E}_{\abs{\theta_i}\leq \theta_*}\cos^2{\theta_i}\leq 1$ and $\mathbb{E}_{\abs{\theta_i}\leq \theta_*}\sin^2{\theta_i}=c$.

    Therefore, we have
    \begin{equation}
    \begin{aligned}
        \sum_{s\in\bm{P}^{L+1}_n(L)/\bm{P}^{L+1}_n(M)} \mathbb{E}_{\Theta}\left(f(s,\bm{\theta}) \tr{\rho s_0}\right)^2 &\leq \sum_{l=M+1}^{L} \sum_{\{s\in \bm{P}^{L+1}_n(l)/\bm{P}^{L+1}_n(l-1) \}} \mathbb{E}_{\Theta}\left(f(s,\bm{\theta}) \tr{\rho s_0}\right)^2\\
        & \leq \sum_{l=M+1}^{L} \sum_{\{s\in \bm{P}^{L+1}_n(l)/\bm{P}^{L+1}_n(l-1) \}} \left(\frac{\theta_*^2}{3}\right)^l \\
        & \leq \sum_{l=M+1}^{L} \binom{L}{l} c^l\\
        & \leq  \left(1+c\right)^L-\left(1+c\right)^M.\\
      \end{aligned}
\end{equation}

\end{proof}

The worst-case truncation error is upper bounded the following lemma.
\begin{lemma}
    For Pauli observable $O$, the difference between $\langle O \rangle$ and $\langle O \rangle ^{(M)}$ in the small angle space $\Theta$ is upper bounded by
    \begin{equation}
        \abs{\langle O \rangle - \langle O \rangle ^{(M)}} \leq  \left(1+\sin{\theta_*}\right)^L - \left(1+\sin{\theta_*}\right)^M.
    \end{equation}
\end{lemma}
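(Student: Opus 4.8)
The plan is to mirror the structure of the preceding mean-square-error lemma, replacing the averaging argument with a direct triangle-inequality estimate. First I would write the truncation error as the sum over exactly those Pauli paths that survive past order $M$,
\begin{equation}
\langle O \rangle - \langle O \rangle^{(M)} = \sum_{s\in \bm{P}^{L+1}_n(L)/\bm{P}^{L+1}_n(M)} f(s,\bm{\theta})\tr{\rho s_0},
\end{equation}
and apply the triangle inequality, reducing the problem to bounding $\abs{f(s,\bm{\theta})}\,\abs{\tr{\rho s_0}}$ for a single path.

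Next I would redo the per-path bookkeeping used in the average-case lemma, but now keeping absolute values rather than taking expectations. The normalization prefactors cancel: since $s_L=\sfrac{O}{(\sqrt{2})^n}$ one has $\abs{\Tr{Os_L}}=(\sqrt{2})^n$, while $\abs{\tr{\rho s_0}}\leq (\sqrt{2})^{-n}$ because $s_0$ is a normalized Pauli word and $\rho$ a density matrix, so their product is at most $1$. Each of the $L$ transition factors $\Tr{s_iU_i(\theta_i)s_{i-1}U_i(\theta_i)^\dagger}$ equals $1$, $\cos\theta_i$, or (up to a phase) $\sin\theta_i$ according to Eq.~\eqref{eq:evolution}, and a path in $\bm{P}^{L+1}_n(l)/\bm{P}^{L+1}_n(l-1)$ carries exactly $l$ sine factors. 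Using $\abs{\cos\theta_i}\leq 1$ and $\abs{\sin\theta_i}\leq \sin\theta_*$ (valid because $\abs{\theta_i}\leq\theta_*$ with $\theta_*\leq\pi/4$), every such summand is bounded by $(\sin\theta_*)^l$.

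The counting step then reuses the estimate that at most $\binom{L}{l}$ paths carry exactly $l$ sine factors, giving
\begin{equation}
\abs{\langle O \rangle - \langle O \rangle^{(M)}} \leq \sum_{l=M+1}^{L}\binom{L}{l}(\sin\theta_*)^l.
\end{equation}
To put this into the claimed closed form I would invoke the binomial theorem to write the right-hand side as $(1+\sin\theta_*)^L-\sum_{l=0}^{M}\binom{L}{l}(\sin\theta_*)^l$, and then use the monotonicity $\binom{L}{l}\geq\binom{M}{l}$ for $0\leq l\leq M\leq L$ to conclude $\sum_{l=0}^{M}\binom{L}{l}(\sin\theta_*)^l\geq(1+\sin\theta_*)^M$, which yields the stated bound.

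Most of this is a routine adaptation of the average-case argument: the triangle inequality plays the role previously played by orthogonality of distinct paths, and $\sin\theta_*$ replaces $\sqrt{c}$. The one genuinely new ingredient is the final combinatorial inequality $\sum_{l=0}^{M}\binom{L}{l}(\sin\theta_*)^l\geq(1+\sin\theta_*)^M$, needed to compress the partial binomial sum into a clean difference of powers; since it follows immediately from $\binom{L}{l}\geq\binom{M}{l}$, I do not expect it to pose any serious obstacle.
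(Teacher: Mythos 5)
Your proposal is correct and follows essentially the same route as the paper: triangle inequality over the truncated Pauli paths, a per-path bound of $\sin^l\theta_*$ for paths with exactly $l$ sine factors, the $\binom{L}{l}$ counting bound, and the binomial-sum compression. The only difference is that you explicitly justify the final step $\sum_{l=0}^{M}\binom{L}{l}\sin^l\theta_*\geq(1+\sin\theta_*)^M$ via $\binom{L}{l}\geq\binom{M}{l}$, which the paper asserts without comment.
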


\begin{proof}
    By the definition of the expectation value, we have
    \begin{equation}
        \begin{aligned}
            \abs{\langle O \rangle - \langle O \rangle ^{(M)}} &= \abs{\sum_{s\in\bm{P}^{L+1}_n(L)/\bm{P}^{L+1}_n(M)} f(s,\bm{\theta}) \tr{\rho s_0}} \\
            &\leq \sum_{s\in\bm{P}^{L+1}_n(L)/\bm{P}^{L+1}_n(M)} \abs{f(s,\bm{\theta}) \tr{\rho s_0}}\\
            &= \sum_{l=M+1}^{L} \sum_{\{s\in \bm{P}^{L+1}_n(l)/\bm{P}^{L+1}_n(l-1) \}} \abs{f(s,\bm{\theta}) \tr{\rho s_0}}\\
            &\leq \sum_{l=M+1}^{L} \sum_{\{s\in \bm{P}^{L+1}_n(l)/\bm{P}^{L+1}_n(l-1) \}} \sin^l{\theta_*} \\
            &\leq \sum_{l=M+1}^{L} \binom{L}{l} \sin^l{\theta_*}\\
            &\leq  \left(1+\sin{\theta_*}\right)^L - \left(1+\sin{\theta_*}\right)^M.
        \end{aligned}
    \end{equation}
\end{proof}

For any $\delta>0$, if $0<c<\frac{\ln{1+\frac{\delta}{2}}}{L-M}\leq \frac{\ln{2}}{M}$, then we have:
\begin{equation}
  \begin{aligned}
     \left(1+c\right)^{L}-\left(1+c\right)^{M}&= \left(1+c\right)^{M} \left(\left(1+c\right)^{L-M}-1 \right)\\
    &\leq 2\left(1+\frac{\delta}{2}-1\right)=\delta.
  \end{aligned}
\end{equation}
where the last inequality is due to $(1+x)^m\leq e^{mx}$. There $(1+c)^M\leq e^{cM}\leq 2$ and $\left(1+c\right)^{L-M}\leq e^{c(L-M)}\leq 1+\frac{\delta}{2}$.

Thus we have the following theorem. 
\begin{theorem}
    For any given $\delta>0$, if $\frac{\ln{1+\frac{\delta}{2}}}{L-M}\leq \frac{\ln{2}}{M}$, to ensure the difference between the expectation value and the $M$ truncation $\sin$ approximation less than $\delta$ in worst case, we have $\theta_*=\frac{\ln{1+\frac{\delta}{2}}}{L-M}=\order{\frac{1}{L}}$. In average case, to make the mean square error less than $\delta$, we have $\theta_*=\sqrt{\frac{3\ln{1+\frac{\delta}{2}}}{L-M}}=\order{\frac{1}{\sqrt{L}}}$.
\end{theorem}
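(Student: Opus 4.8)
\section*{Proof proposal for Theorem~\ref{thm:truncation_error}}

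The plan is to reduce the truncation error to a controllable tail of the Pauli-path expansion of the Heisenberg-evolved observable, and then to treat the worst case and the average case separately before converting both bounds into the claimed constraints on $\theta_*$. First I would write $\langle O \rangle - \langle O \rangle^{(M)} = \sum_{s} f(s,\bm{\theta})\tr{\rho s_0}$, where the sum ranges over exactly those Pauli paths whose weight $f(s,\bm{\theta})$ carries more than $M$ factors of $\sin$. The central structural fact is that each such weight is a product over the $L$ gates of factors equal to $1$, $\cos\theta_i$, or $\sin\theta_i$, where a $\sin\theta_i$ arises precisely at each gate where the path branches (anticommutation of $P_i$ with the current Pauli word), so a path with exactly $l$ sine factors contributes an amplitude of size at most $\sin^l\theta_*$, and there are at most $\binom{L}{l}$ such paths since the branch positions are chosen among the $L$ gates.

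For the worst case I would apply the triangle inequality directly, bounding $\abs{f(s,\bm{\theta})\tr{\rho s_0}}\leq \sin^l\theta_*$ for a path with $l$ sine factors (the prefactor $\Tr{Os_L}\tr{\rho s_0}$ has magnitude at most $1$ after normalization, and the remaining gate factors are bounded by $\abs{\cos}\leq 1$ or $\sin\theta_*$), and then summing the binomial counts to obtain
\begin{equation}
  \abs{\langle O \rangle - \langle O \rangle^{(M)}} \leq \sum_{l=M+1}^{L}\binom{L}{l}\sin^l\theta_* \leq (1+\sin\theta_*)^L - (1+\sin\theta_*)^M.
\end{equation}
For the average case over $\bm{\theta}\in\Theta$, the key observation is that the symmetric integration range makes the cross terms vanish: for two distinct paths $s\neq s'$ the product $f(s,\bm{\theta})f(s',\bm{\theta})$ must contain, at the first gate where they differ, a factor $\cos\theta_i\sin\theta_i$, whose integral over $[-\theta_*,\theta_*]$ is zero. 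Establishing this orthogonality is the main obstacle, since it requires tracking the two paths gate by gate and arguing that matching of the recorded Pauli words forces either identical branching or an uncompensated $\sin$ factor; this is where a careful induction on the gate index is needed. Granting it, the mean square error collapses to the diagonal $\sum_s \mathbb{E}_\Theta (f(s,\bm{\theta})\tr{\rho s_0})^2$, and with $c := \mathbb{E}_{\abs{\theta}\leq\theta_*}\sin^2\theta = \theta_*^2/3 + \order{\theta_*^4}$ each $l$-sine path contributes at most $c^l$, so the same binomial summation gives $\mathbb{E}_\Theta(\langle O\rangle - \langle O\rangle^{(M)})^2 \leq (1+c)^L - (1+c)^M$.

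Finally I would convert both geometric-type bounds into the stated thresholds. Writing each bound as $(1+a)^M\big((1+a)^{L-M}-1\big)$ with $a = \sin\theta_*$ or $a = c$, I would use $(1+x)^m \leq \exp{mx}$ twice: the constraint $\frac{\ln(1+\delta/2)}{L-M}\leq \frac{\ln 2}{M}$ guarantees $(1+a)^M \leq \exp{aM}\leq 2$, while the choice $a = \frac{\ln(1+\delta/2)}{L-M}$ gives $(1+a)^{L-M}\leq \exp{a(L-M)} = 1+\frac{\delta}{2}$, so the product is at most $2\cdot\frac{\delta}{2}=\delta$. Taking $\theta_* = \frac{\ln(1+\delta/2)}{L-M}$ in the worst case (using $\sin\theta_*\leq\theta_*$) and $\theta_* = \sqrt{3\ln(1+\delta/2)/(L-M)}$ in the average case (so that $c\approx\theta_*^2/3 = \frac{\ln(1+\delta/2)}{L-M}$) then yields the two claimed rates $\order{1/(L-M)}$ and $\order{1/\sqrt{L-M}}$.
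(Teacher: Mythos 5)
Your proposal is correct and follows essentially the same route as the paper's own proof: the same Pauli-path decomposition, the same worst-case triangle-inequality bound $\sum_{l>M}\binom{L}{l}\sin^l\theta_*$, the same orthogonality-of-cross-terms lemma (via the vanishing of $\int_{-\theta_*}^{\theta_*}\cos\theta\sin\theta\,\mathrm{d}\theta$) reducing the MSE to a diagonal sum with $c=\mathbb{E}\sin^2\theta$, and the same final conversion through $(1+a)^M\big((1+a)^{L-M}-1\big)\leq e^{aM}(e^{a(L-M)}-1)\leq\delta$. You correctly identified the path-orthogonality induction as the one step requiring care, and your sketch of it matches the paper's argument.
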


\subsection{Computational Cost}

In this section we are going to bound the computational cost of estimating $\langle O \rangle ^{(M)}$, while the input srate $\rho=\sum_{a,b}\rho_{a,b}\ketbra{a}{b}$ is a sparse matrice with $\mathrm{Poly}(n)$ non-zero terms and the observable $O$ is a linear combination of $\mathrm{Poly}(n)$ Pauli terms. The analysis is splitted into three step: Firstly, we calculate the size of $\abs{\bm{P}^{L+1}_n(M)}$ when $O$ is a single Pauli operator, then the size for general $O$ is multiplied a $\mathrm{Poly}(n)$ factor. Second, we estimate the computational cost for calculate the contribution $f(s,\bm{\theta}) \tr{\rho s_0}$ of a single Pauli path $s$.

When $L\geq M \geq 2$, the size of the truncated Pauli path set $\bm{P}^{L+1}_n(M)$ is upper bounded by $L^M$, for single Pauli observable $O$. This is because in $f(s,\bm{\theta})$, for any $i\in\{1,\cdots,L\}$, $\sin{\theta_i}$ can appear at most once in the Pauli weight, if $\sin{\theta_i}$ appears then $\cos{\theta_i}$ will not appear.
Thus the number of Pauli paths with exact $l$ $\sin{\theta}$ terms in their Pauli weight no more than $\binom{L}{l}$. And we have
\begin{equation}
  \begin{aligned}
    \abs{\bm{P}^{L+1}_n(M)}&\leq \sum_{l=0}^{M}\binom{L}{l}.
  \end{aligned}
\end{equation}
To proof the right side of the above inequality is upper bounded by $L^M$, we use the method of induction. When $M=2$ and $L\geq 2$, we have $\binom{L}{0}+\binom{L}{1}+\binom{L}{2}=1+L+L(L-1)/2\leq L^2$. Assume that the inequality holds for $M=k$, then we have $\sum_{l=0}^{k+1}\binom{L}{l} \leq L^k +\binom{L}{k+1}$. Since $\frac{L}{(k+1)!}\leq L-1$, we have $\binom{L}{k+1}\leq \frac{L^{k+1}}{(k+1)!} \leq L^k(L-1)=L^{k+1}-L^k$, then we have $\sum_{l=0}^{k+1}\binom{L}{l} \leq L^k +\binom{L}{k+1}\leq L^k+L^{k+1}-L^k=L^{k+1}$. Thus the inequality holds for $M=k+1$. This completes
\begin{equation}
    \begin{aligned}
        \abs{\bm{P}^{L+1}_n(M)}&\leq \sum_{l=0}^{M}\binom{L}{l}\leq L^M.
    \end{aligned}
\end{equation}

Thus, when $O$ is a linear combination of $\mathrm{Poly}(n)$ Pauli terms, the size $\abs{\bm{P}^{L+1}_n(M)}\leq \mathrm{Poly}(n) L^M $. For path weight $f(s,\bm{\theta})=\Tr{Os_L}\left(\prod_{i=1}^{L}\Tr{s_iU_i(\theta_i) s_{i-1}U_i(\theta_i)^\dagger}\right)$, each term can be effectively calculating by Clifford simulation~\cite{gottesman1998heisenberg} and Eq.~\eqref{eq:evolution}, with computational cost $\order{n}$. The total computational cost for a single $f(s,\bm{\theta})$ is $\order{nL}$.

For $\tr{\rho s_0}=\sum_{a,b}\rho_{a,b} \bra{b}s_0\ket{a}$, each terms can be calculated by read the $a$-row $b$-cow element in $s_0$, with cost $\order{n}$. Therefore, the total cost is $(\order{nL}+\order{n})\mathrm{Poly}(n) L^M=\order{\mathrm{Poly}(n) L^{M+1}}$.

\section{Details of Numerical simulation in Sec. \ref{sec:Numerical} B}\label{sec:CPDR}

In Sec. \ref{sec:Numerical} B, we compare the CPDR-ZNE and CPDR-PEC with the learning-based PEC and ZNE protocols through numerical simulations. Here, we provide the details of the our numerical simulation. 

\subsection{Noise model}

As described in Sec.~\ref{sec:Numerical} B, we employ a gate-based noise model that closely approximates the noise characteristics of realistic quantum devices. For each gate, we introduce depolarizing noise following the gate operation and account for thermal relaxation during the gate's duration. The $n$-qubit depolarizing channel is
\begin{equation}
    \mathcal{D}(\rho) = (1-\lambda) \rho + \lambda\frac{I_{2^n}}{2^n}, 
\end{equation}
where $\rho$ is $n$ qubit state, and $I_{2^n}$ is $2^n \times 2^n$ identity matrix, $\lambda \in [0,1]$ is depolarizing noise intensity. And for single-qubit thermal relaxation noise, the channel can be represented as
\begin{equation}
\begin{gathered}
\epsilon(\rho)=\operatorname{tr}_1\left[\Lambda\left(\rho^T \otimes I\right)\right], \quad \Lambda=\left(\begin{array}{cccc}
\epsilon_{T_1} & 0 & 0 & \epsilon_{T_2} \\
0 & 1-\epsilon_{T_1} & 0 & 0 \\
0 & 0 & 0 & 0 \\
\epsilon_{T_2} & 0 & 0 & 1
\end{array}\right) \\
\text{where} \quad \epsilon_{T_1}=e^{-T_{\text{gate}} / T_1}, \quad \epsilon_{T_2}=e^{-T_{\text{gate}} / T_2}.
\end{gathered}
\end{equation}
$T_1$ and $T_2$ is relaxation time and dephasing time, respectively, and $t_{\text{gate}}$ is gate operation time.

We assume for each qubit, $T_1 = 100 \, \mu s$, $T_2 = 50 \, \mu s$. For each single-qubit gate, a thermal relaxation process occurs during the gate operation time $t_{\text{gate}} = 300 \, ns$, and followed by single-qubit depolarizing noise with an intensity of $\lambda_{\text{single}}=0.01$.
For two-qubit gates, the gate operation time is $t_{\text{gate}} = 800 \, ns$, and the following two-qubit depolarizing noise has an intensity of $\lambda_{\text{double}}=0.04$.
We assume that the noise in this model represents the base noise level, characterized by a constant $\lambda$.

In certain mitigation protocols, such as ZNE, it is necessary to scale the noise level. For instance, when the noise level is scaled from $\lambda$ to $G \lambda$, where $G$ is the scaling factor, the characteristics of depolarizing noise and thermal relaxation noise are adjusted as follows.
Scaling the depolarizing noise intensities \(\lambda_{\text{single}}\) and \(\lambda_{\text{double}}\) to \( G\lambda_{\text{single}} \) and \( G\lambda_{\text{double}} \), respectively, and scaling the gate operation time \( t_{\text{gate}} \) to \( Gt_{\text{gate}} \). Furthermore, to adapt to varying hardware capabilities, we also benchmark our CPDR protocols under different base noise rates to evaluate whether they can maintain their advantages. When scaling the base noise rate, we adopt the same numerical approach described above.

\subsection{Mitigation protocols}

Below, we outline the settings for each mitigation protocol used in the numerical simulations. 
We benchmark our CPDR protocols against ZNE and learning-based PEC in Trotterized time evolution simulations involving 8, 9, 10, and 20 qubits.
To ensure fairness, the number of shots for every noisy circuit in each protocol is set to \( 10^4 \). Additionally, the simulations include the entire shot process, incorporating the effects of static measurement errors.

For the ZNE protocol, noise circuit expectations are scaled with noise levels using the factors \( G = 1, 1.2, 1.6 \). We apply linear, quadratic, and exponential functions to extrapolate the ideal circuit expectations, as also done in Ref.~\cite{kim2023evidence}. Specifically, for the exponential function, we used the form \( b \exp{a \lambda} \), where \( a \) and \( b \) are the fitting coefficients for the exponential extrapolation.

For CPDR-ZNE, the training set consists of 144 pairs of $(\theta_h, \theta_J)$: 
$$
\mathcal{T} = \left\{(\theta_h, \theta_J)
| \theta_h = i\pi/120,
\theta_J = -j\pi/120, i,j = 0,1,2,3,4,5,54,55,56,57,58,59\right\}.
$$
We scale the noise by factors \( G = 1, 1.2, 1.6 \) for each circuit in the training set.

For CPDR-PEC and learning-based PEC, we select a set of operations to insert Pauli gates $\mathcal{G} = \{g_1,g_2,\cdots,g_{21}\}$, where $g_1$ represents no additional gate insertion, and $g_2,g_3,\cdots,g_{21}$ represent insert single Pauli X or Pauli Z in target circuits. The training set for the learning-based PEC includes 2048 Clifford circuits which is construct by replace the rotation gate in target circuit to Clifford gate randomly. And each Clifford circuit is inserted Pauli gate by operator $g_1, g_2, \cdots, g_{21}$. 
As for CPDR-PEC, the training set
includes circuit which rotation angle $(\theta_J, \theta_h) \in \mathcal{T}$, and insert Pauli gate by operator $g_1, \cdots, g_{21}$.

The noise circuits for the 8,9 and 10 qubit case are simulated using MindSpore Quantum~\cite{xu2024mindspore}. For the 20-qubit case, the noise circuits are simulated using the tensor network simulator proposed in Ref.~\cite{noh2020efficient} with bond
dimension is $120$. In both cases, the noiseless circuits for benchmark the accuracy of mitigation result are simulated in MindSpore Quantum, and the noiseless circuits in training set of CPDR-ZNE and CPDR-PEC are obtained by SPD with truncated number $13$.

\section{Additional numerical results of CPDR-ZNE and CPDR-PEC}\label{sec:ap:Numerical}

The results of comparison of different QEM protocols with different numbers of qubits and Trotter steps are illustrated as following.
Specifically, we displays the mean squared error (MSE) over $\theta_J$, defined as $\mathbb{E}_{\theta_J}\abs{\langle O \rangle-\widehat{\langle O \rangle}}^2$, for various $\theta_h$, and the MSE over $\theta_h$ for different $\theta_J$. 
To ensure more accurate calculation of the MSE, for each set of parameters \((\theta_h, \theta_J)\), we repeated each pair of \((\theta_h, \theta_J)\) with  every mitigation protocol 100 times.
As quantum devices continue to evolve and their base noise decrease, we evaluate our framework under reduced base noise scenarios.
We follow the same steps to reduce the base noise $\lambda$ as those used for scaling the noise in ZNE.
The results demonstrate that our CPDR framework remains highly competitive, even when applied to future, more advanced quantum devices.

For an 8-qubit, 4-step circuit, we evaluate the MSE of different mitigation protocols over \(\theta_J\) and \(\theta_h\) under various base noises \( 0.1\lambda, 0.2\lambda, 0.5\lambda \), and \(\lambda\) used as a reference for comparison. For ZNE with linear, quadratic and exponential fitting function, we denote as ``ZNE-linear'', ``ZNE-quad'' and ``ZNE-exp'', respectively. We also present the MSE of the noise expectation as a comparison, denoted as ``noise''.

\begin{center}
\begin{minipage}{0.99\textwidth}
  \includegraphics[width=\textwidth]{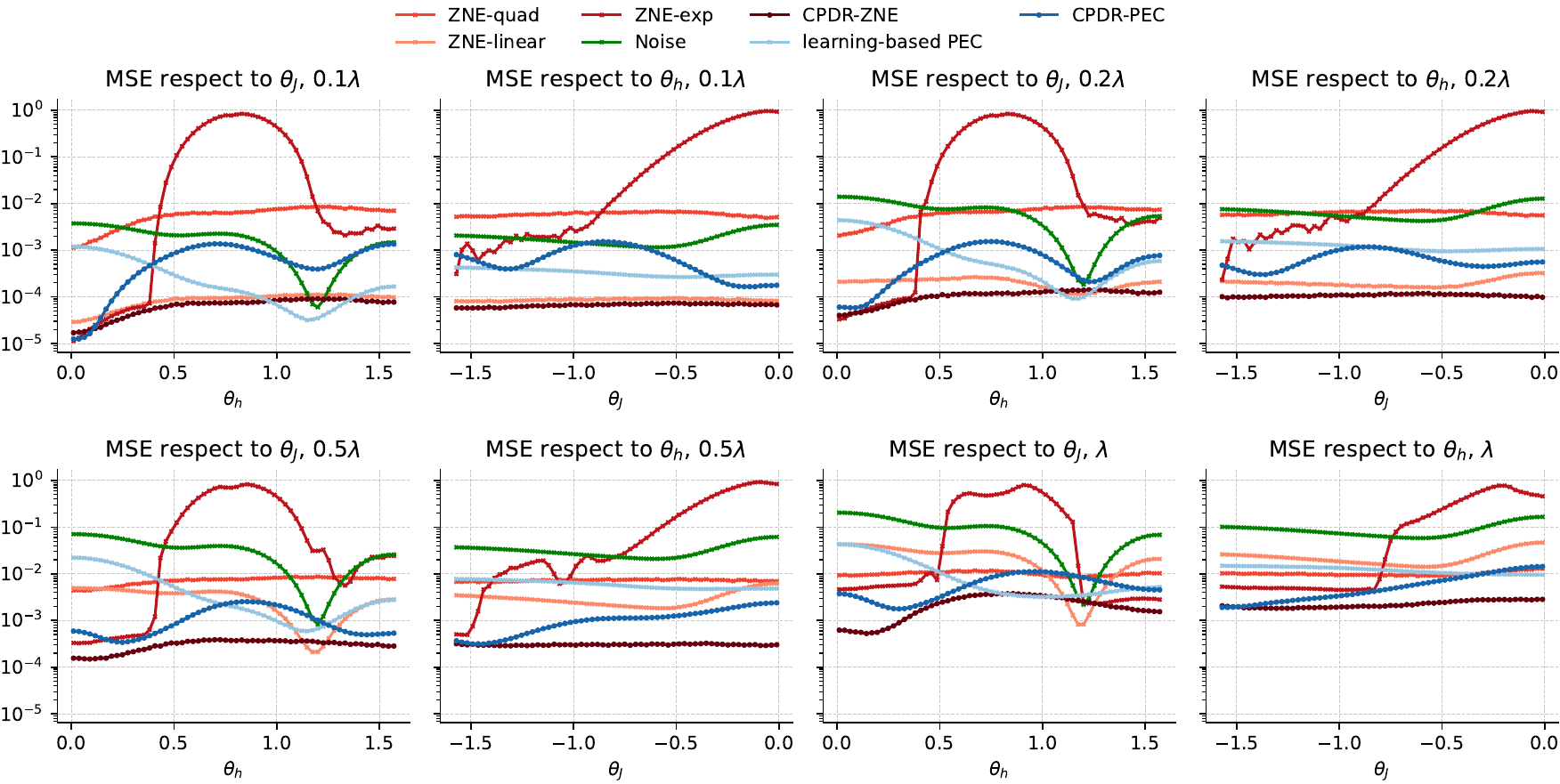}
\end{minipage}
\end{center}

For an 8-qubit, 6-step circuit, the results are as follows.

\begin{center}
\begin{minipage}{0.99\textwidth}
  \includegraphics[width=\textwidth]{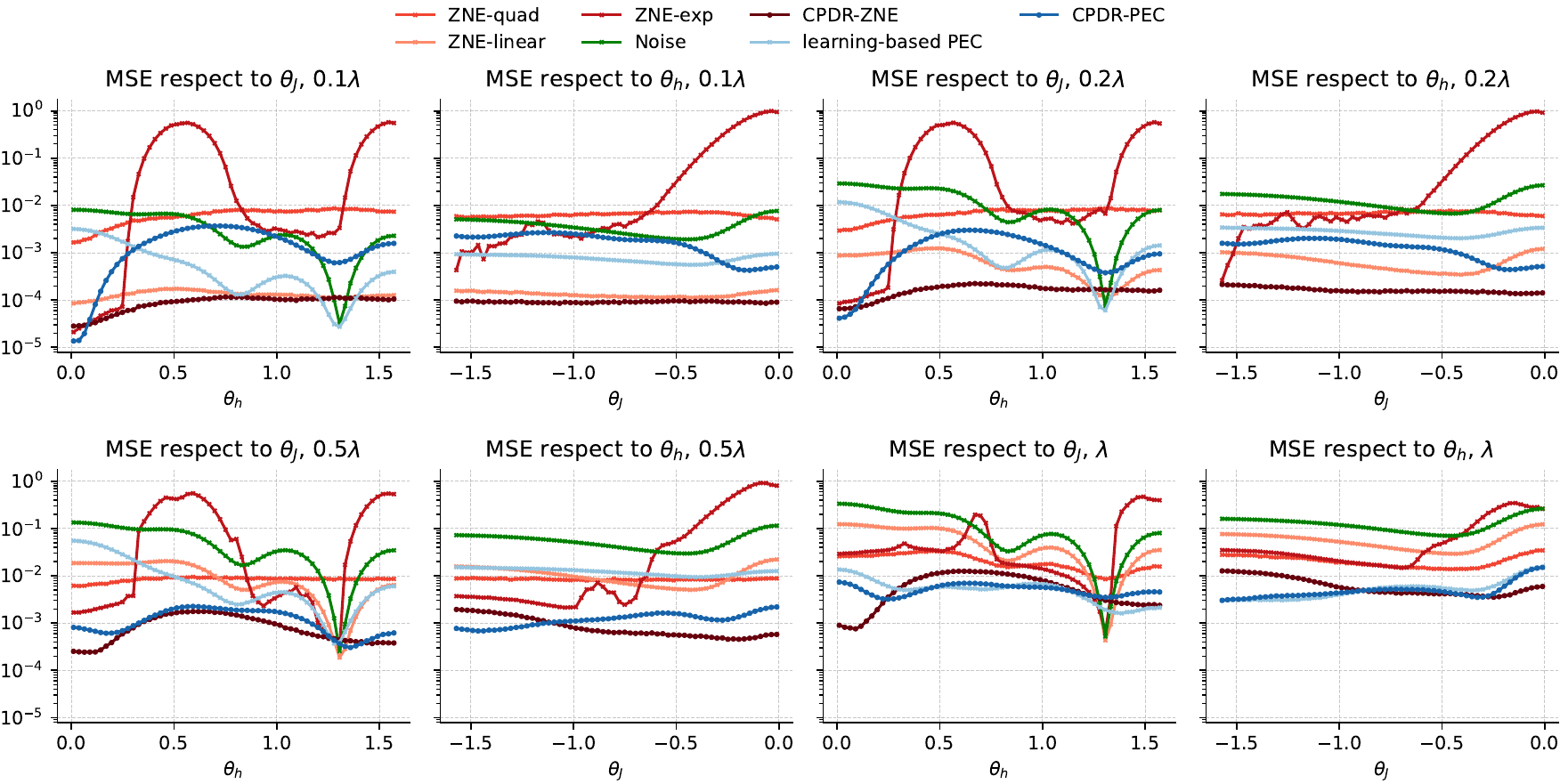}
\end{minipage}
\end{center}

For an 9-qubit, 5-step circuit, the results are as follows.

\begin{center}
\begin{minipage}{0.99\textwidth}
  \includegraphics[width=\textwidth]{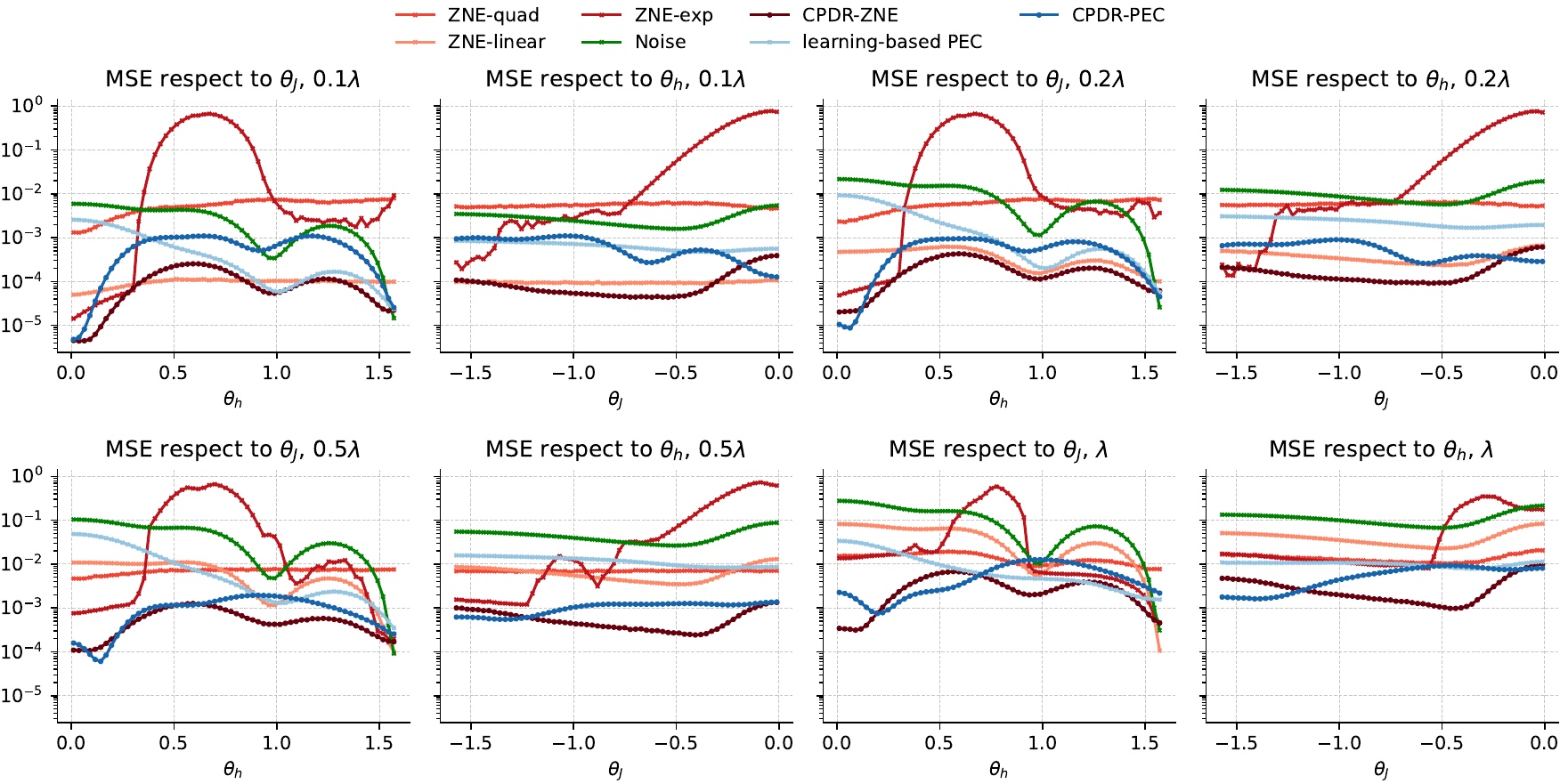}
\end{minipage}
\end{center}

For an 10-qubit, 5-step circuit, the results are as follows.

\begin{center}
\begin{minipage}{0.99\textwidth}
  \includegraphics[width=\textwidth]{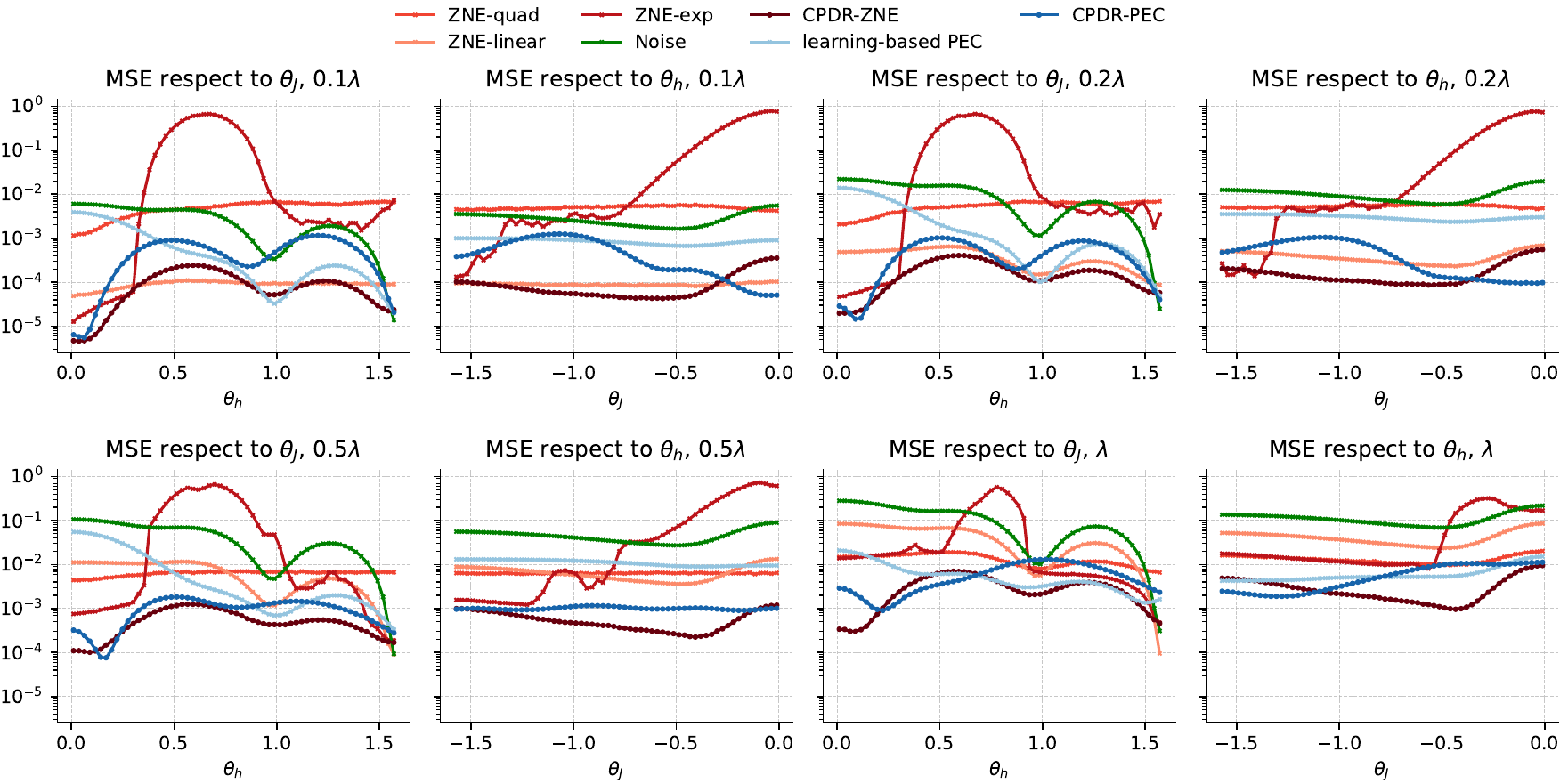}
\end{minipage}
\end{center}

In CPDR-ZNE and CPDR-PEC, the noiseless circuit expectation values in the training set are obtained through the SPD simulator. 
But there are inherent errors between the exact values and the SPD outputs.
Therefore, it is important to verify that the mitigation results are robust to the errors introduced by the SPD simulator.
To this end, we compare the MSEs of CPDR-ZNE and CPDR-PEC trained using both SPD simulator results and the exact values.
The results presented below show that the MSEs of the mitigation protocols trained with both approaches are nearly identical and thus demonstrate the robustness of our CPDR framework to the errors introduced by the SPD simulator.

For 8-qubit, 4-step with the base noise rate $\lambda,0.5\lambda,0.2\lambda$ and $0.1\lambda$, the MSE of CPDR-ZNE, CPDR-PEC trained with SDP and trained with exact value over $\theta_J$ and $\theta_h$ as follows.

\begin{center}
\begin{minipage}{0.99\textwidth}
  \includegraphics[width=\textwidth]{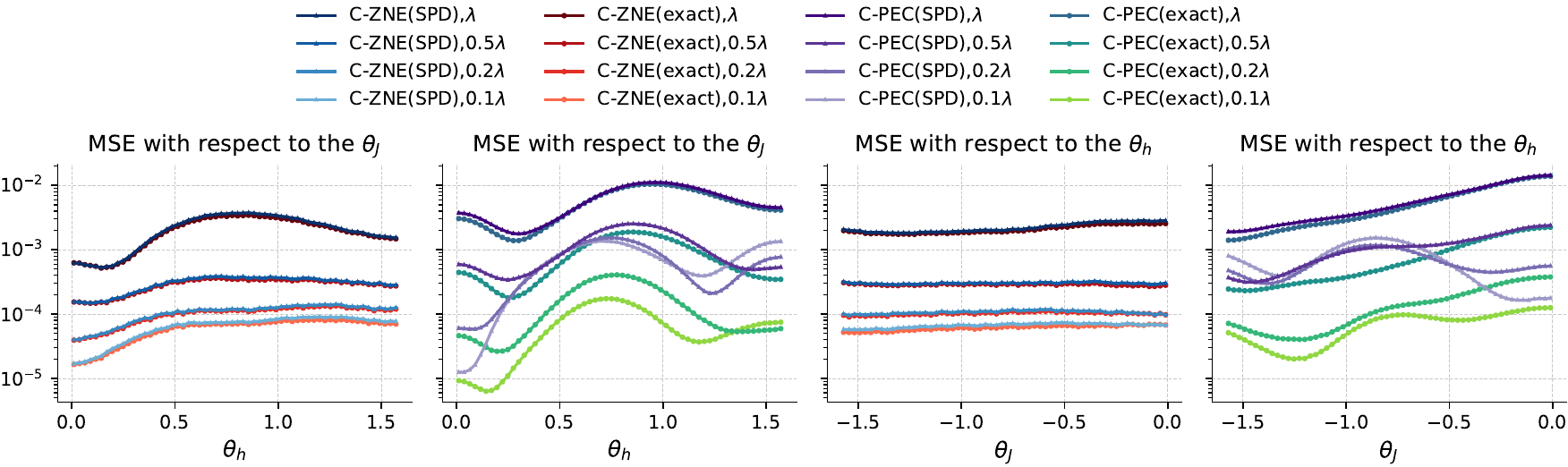}
\end{minipage}
\end{center}

For 8-qubit, 6-step, the results are as follows.

\begin{center}
\begin{minipage}{0.99\textwidth}
  \includegraphics[width=\textwidth]{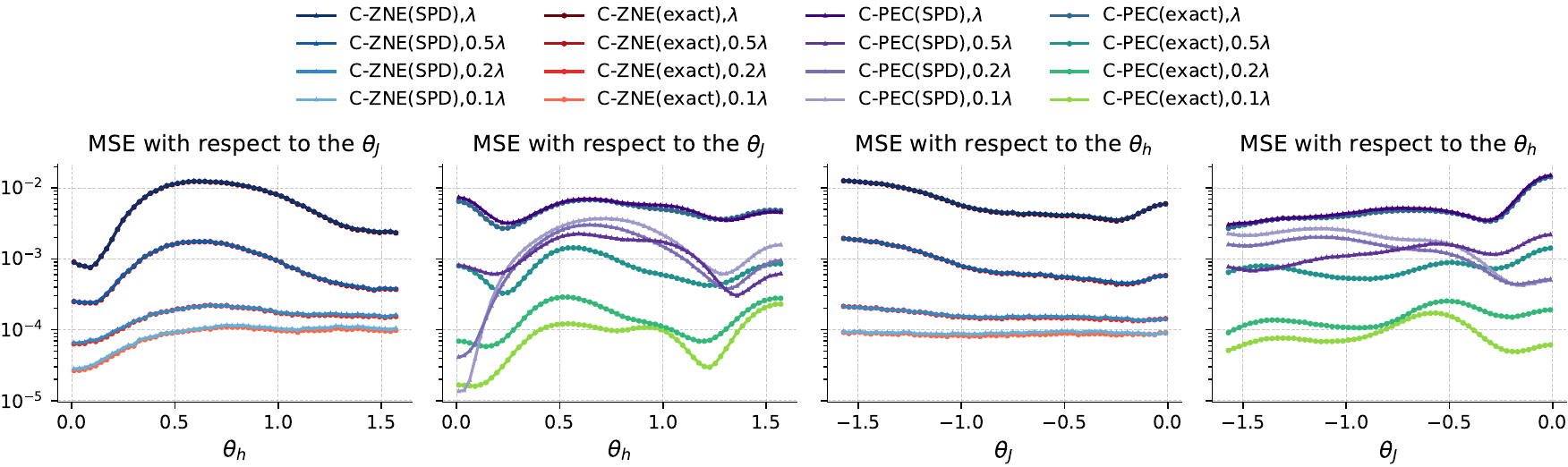}
\end{minipage}
\end{center}

For 9-qubit, 5-step, the results are as follows.

\begin{center}
\begin{minipage}{0.99\textwidth}
  \includegraphics[width=\textwidth]{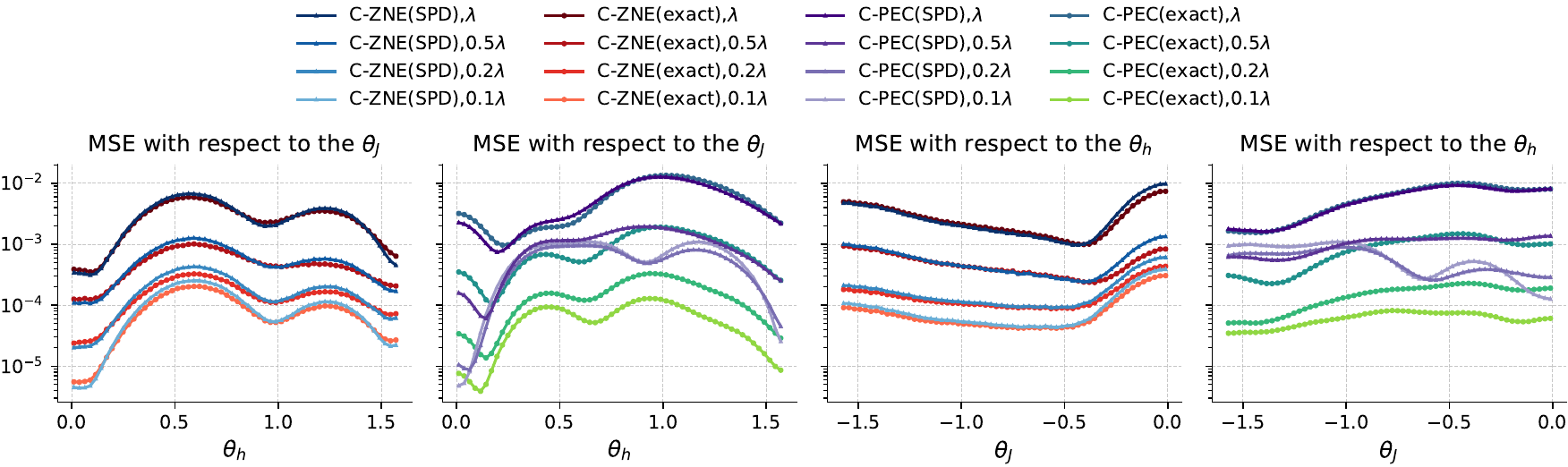}
\end{minipage}
\end{center}

For 10-qubit, 5-step, the results are as follows.

\begin{center}
\begin{minipage}{0.99\textwidth}
  \includegraphics[width=\textwidth]{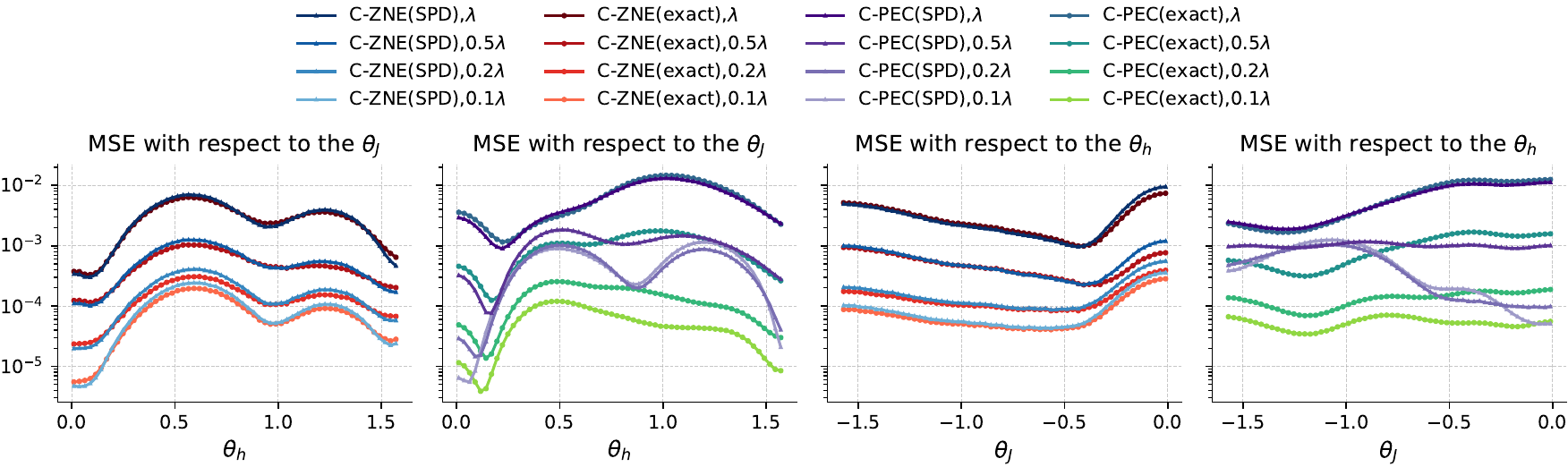}
\end{minipage}
\end{center}

\section{The quantum circuit on IBM's Eagle processor}\label{sec:ap:Numerical:IBM}

\begin{figure}[htbp] 
  \includegraphics[width=1\textwidth]{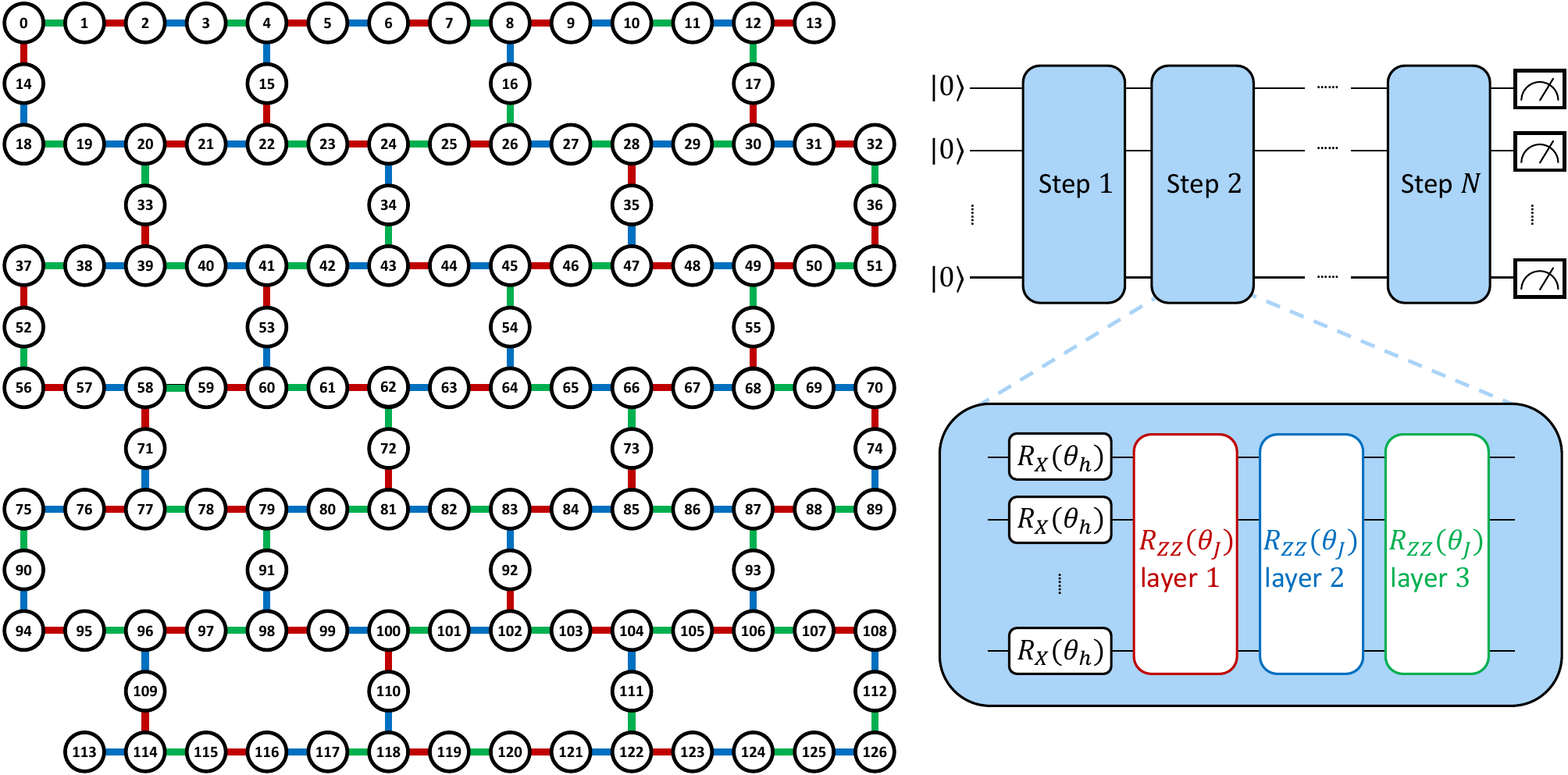}
  \captionsetup{justification=justified,singlelinecheck=false}
 \caption{\justifying
 The circuit for first-order Trotterized time evolution in IBM experiments comprises $N$ Trotter steps. Each step includes a layer of $R_X$ gates, all with a common rotation angle $\theta_h$ applied to every qubit, followed by three layers of $R_{ZZ}$ gates, each with a common rotation angle $\theta_J$. The qubits on which the $R_{ZZ}$ gates act are illustrated in the Eagle processor topology on the left and are marked with the same color.}\label{fig:IBM_Ansatz}
\end{figure}

In Sec. \ref{sec:Numerical} C, we examine four Trotterized time evolution circuits of Ising model in Fig.~\ref{fig:estimate_AQC}, which corresponding to Fig.~3(b,c), Fig.~4(a,b) of Ref.~\cite{kim2023evidence}. We compare our CPDR mitigation protocol with mitigation in \cite{kim2023evidence} with the same quantum hardware experiment data. Here we offer the detail of those circuits we used.

In Ref.~\cite{kim2023evidence}, IBM reported experiments performed on a 127-qubit Eagle processor.
The benchmark circuits were designed based on the Trotterized time evolution of a 2D transverse-field Ising model, tailored to align with the Eagle processor's topology.

The Hamiltonian that governs the system's time dynamics is
\begin{equation}
  H=-J\sum_{\langle i,j\rangle \in E} Z_iZ_j+h\sum_i X_i,
\end{equation}
where $J$ represents the coupling strength, $h$ is the transverse field strength, $E$ corresponds to the Eagle
processor’s topology.

The spin dynamics are simulated using first-order Trotterized time evolution of the Hamiltonian, expressed as
\begin{equation}\label{eq:time_evolution2}
  \begin{aligned}
    \ket{\psi(T)}&=\prod^N_{k=1} \exp{-i \frac{T}{N} \cdot H} \ket{\psi(0)}
\approx\prod^N_{k=1}\left(\prod_{\langle i, j\rangle \in E} \mathrm{e}^{i \frac{JT}{N} Z_i Z_j}\prod_i\mathrm{e}^{-i \frac{hT}{N} X_i }\right) \ket{\psi(0)} \\
  &=\prod^N_{k=1}\left(\prod_{\langle i, j\rangle \in E} \mathrm{R}_{Z_i Z_j}(-\frac{2 JT}{N})\prod_i \mathrm{R}_{X_i}(\frac{2Th}{N}) \right)\ket{\psi(0)}, \\
  \end{aligned}
\end{equation}
where the total evolution time $T$ is divided into $N$ Trotter steps. 
The Trotterized time evolution is realized by the ansatz depicted in Fig.~\ref{fig:IBM_Ansatz}. Each step consists of one layer of $R_X$ gates followed by three layers of $R_{ZZ}$ gates.

All the 127 qubits are initialized to $\ket{0}$. For simplicity, IBM set $\theta_J = -\frac{2JT}{N} = -\frac{\pi}{2}$ and varied $\theta_h =  \frac{2Th}{N}$ within the range $[0, \frac{\pi}{2}]$.

In Sec. \ref{sec:Numerical} C, four circuits in Fig.~\ref{fig:estimate_AQC}(a), (b), (c) and (d) are performed using the following observables and Trotter steps: 
\begin{enumerate} 
\item Fig.~\ref{fig:estimate_AQC}(a): The observable is a weight-10 operator $\left\langle X_{13,29,31} Y_{9,30} Z_{8,12,17,28,32} \right\rangle$. The circuit has $N=5$ Trotter steps.
\item Fig.~\ref{fig:estimate_AQC}(b): The observable is a weight-17 operator $\left\langle X_{37,41,52,56,57,58,62,79} Y_{75} Z_{38,40,42,63,72,80,90,91} \right\rangle$. The circuit has $N=5$ Trotter steps. 
\item Fig.~\ref{fig:estimate_AQC}(c): An additional layer of $R_x(\theta_h)$ gates is applied before measurement. The observable is a weight-17 operator $\left\langle X_{37,41,52,56,57,58,62,79} Y_{38,40,42,63,72,80,90,91} Z_{75} \right\rangle$. The circuit has $N=5$ Trotter steps. 
\item Fig.~\ref{fig:estimate_AQC}(d): The observable is $\langle Z_{62} \rangle$, a single-site magnetization for qubit 62. The circuit has $N=20$ Trotter steps.
\end{enumerate}

In the training process of CPDR-ZNE, we obtain the map between the noise circuit expectations and the ideal expectations by solving Eq.~\eqref{eq:L2_loss_clifford_perturbation}, with the regularization parameter $\alpha = 2\times 10^{-5}$. The noiseless expectation references in training set are obtained by SPD, the truncated number is set as $7,3,2,10$  for the four circuits, respectively.

\end{document}